\newtheorem{theorem}{Theorem}
\newtheorem{corollary}[theorem]{Corollary}
\newtheorem{lemma}[theorem]{Lemma}
\theoremstyle{definition}
\newtheorem{definition}[theorem]{Definition}
\newtheorem{algorithm}[theorem]{Algorithm}
\newtheorem{remark}[theorem]{Remark}
\newcommand{\sO}{\ensuremath{\widetilde{{O}}}\xspace}
\newcommand{\OB}{\ensuremath{{O}_B}\xspace}
\newcommand{\sOB}{\ensuremath{\widetilde{{O}}_B}\xspace}
\def\abs#1{\mathopen| #1 \mathclose|}	% use instead of $|x|$ 
\def\norm#1{\mathopen\| #1 \mathclose\|}% use instead of $\|x\|$ 
\def\wt#1{\widetilde #1}
\def\normi#1{\norm{#1}_{\infty}}
\begin{document}
\title{Accelerated Approximation of the Complex  Roots 
and Factors of a Univariate Polynomial}

\author{Victor Y. Pan$^{[1,2,a]}$, Elias P. Tsigaridas$^{[3]}$, Vitaly Zaderman$^{[2,b]}$,  and  
Liang Zhao$^{[2,c]}$\\[5pt]
\and\\
$^{[1]}$ Department of Mathematics and Computer Science \\
Lehman College of the City University of New York \\
Bronx, NY 10468 USA \\
and\\
%$^{[2]}$ 
$^{[2]}$Ph.D. Programs in Mathematics  and Computer Science \\
The Graduate Center of the City University of New York \\
New York, NY 10036 USA \\
\and\\
%$^{[a]}$
 $^{[a]}$victor.pan@lehman.cuny.edu \\
http://comet.lehman.cuny.edu/vpan/  \\
\and\\
$^{[2]}$  INRIA, Paris-Rocquencourt Center, {\em PolSys} \\
    Sorbonne Universit\'es, UPMC Univ. Paris 06, {\em PolSys},  \\
UMR 7606, LIP6, F-75005, Paris, France  \\
%$^{[b]}$ 
elias.tsigaridas@inria.fr \\
\and\\
%$^{[a]}$
 $^{[b]}$
vzaderman@yahoo.com, and
%\\
%\and\\
%$^{[a]}$
 $^{[c]}$lzhao1@gc.cuny.edu
 \\
\and\\
The preliminary version of this paper has been presented at SNC 2014}

 \date{}

\maketitle

\begin{abstract}
The algorithms of Pan (1995) \cite{P95} and Pan (2002) \cite{P02})
approximate the roots
 of a complex univariate  polynomial
in  nearly optimal arithmetic and Boolean time but 
%are quite involved and
require a  precision of computing that exceeds the degree of 
the polynomial. This causes numerical stability problems
 when the degree is large. 
We observe, however, that such a difficulty disappears
 at the initial stage of the algorithms,
and in our present paper we extend this stage 
to root-finding within a
nearly optimal arithmetic and Boolean
complexity bounds provided that some mild initial isolation of the roots 
of the input polynomial has been ensured. Furthermore our algorithm
is nearly optimal
for the approximation of the roots isolated in a fixed disc, square 
or another region on the complex plane rather than all complex 
roots of a polynomial. Moreover
the algorithm can be applied to a polynomial
given by a black box for its evaluation (even if its coefficients
are not known); it
 promises to be of practical value for polynomial root-finding
and  factorization,
the latter task being of interest on its own right. 
We also provide a new support for a winding number algorithm,
which enables extension of our progress to obtaining mild initial approximations
to the roots.
We conclude
  with summarizing our algorithms and their extension to  the
  approximation of isolated multiple roots and root clusters.
\end{abstract}

%\paragraph{Keywords:} 
%quasiseparable matrices, factorization, inversion, 
%randomized preconditioning

%\begin{abstract}
%........
%\end{abstract}

\paragraph{Keywords:} 
polynomial equation, roots, root-finding, 
root-refinement, power sums, winding number, complexity

%-----------------------------------------------------------------------------

\section{Introduction}\label{sintr}
 
%-----------------------------------------------------------------------------

%\subsection{Overview}\label{sovrv}

%-----------------------------------------------------------------------------

The classical problem of univariate polynomial root-finding has been central 
in Mathematics and Computational Mathematics for about four millennia since 
the Sumerian times, and is still important for Signal and Image Processing,
Control, Geometric Modeling, Computer Algebra, and Financial Mathematics.
It is closely linked to the approximation of linear and nonlinear factors
of a polynomial, which is also important on its own right
because of the applications to the time series analysis, Weiner filtering, noise 
variance estimation, covariance matrix computation, and
the study of multi-channel systems
(see Wilson (1969) \cite{W69},  Box and Jenkins
(1976)  \cite{BJ76}, Barnett (1983)  \cite{B83}, 
Demeure and  Mullis (1989 and 1990)  \cite{DM89}, \cite{DM90}, 
 Van Dooren (1994)) \cite{VD94}.

 Solution of both problems within nearly optimal 
 arithmetic and Boolean complexity bounds
(up to polylogarithmic factors)
have been obtained 
 in Pan (1995) \cite{P95} and Pan (2002) \cite{P02}, but
the supporting algorithms require a  precision of computing that exceeds
 the degree of 
the input polynomial, and this causes numerical stability problems
 when the degree is large.

The most popular packages of numerical subroutines
for complex polynomial root-finding, such as MPSolve 2000
(see Bini and Fiorentino (2000) \cite{BF00}), EigenSolve 2001 (see Fortune (2002)
\cite{F02}), and MPSolve 2012 (see Bini and Robol (2014) \cite{BR14})
employ alternative root-finders based on functional iterations 
(namely, Ehrlich--Aberth's and WDK, that is,
Weierstrass', also known as Durand-Kerner's) and the QR algorithm applied to 
eigen-solving for the companion matrix of the input polynomial.
The user considers these root-finders practically superior 
by relying on the empirical data about their excellent convergence,
even though these data have no formal support.
To their disadvantage, these algorithms compute 
the roots of a polynomial in an isolated region of the complex plane 
not much faster than all its roots.

We re-examine the subject, still assuming input polynomials with complex coefficients, and show that
 the cited deficiency of the algorithms of
\cite{P95} and \cite{P02} disappears if we 
 modify the initial 
 stage of these algorithms and apply them
 under some mild assumptions
about the initial isolation of the root sets of
the input polynomial. Moreover, like the algorithms of \cite{P95} and \cite{P02}
and unlike WDK and  Ehrlich--Aberth's algorithms,
the resulting algorithms are nearly optimal
for the approximation of the roots in an isolated region of the complex plane.

Next we briefly comment on our results.
In the next sections we elaborate upon them,
 deduce 
the computational cost estimates, and outline some natural extensions.

%-----------------------------------------------------------------------------

%\subsection{Two stages of root-finding}\label{s2st}

%-----------------------------------------------------------------------------

Recall that polynomial root-finding iterations can be partitioned into two stages.
 At first
 a crude (although reasonably good) initial
approximations to all roots or to a set of roots
are relatively slowly computed.
Then these approximations
are refined faster  
by means of the same or distinct
iterations. 

Our first algorithm applies at the second stage
and is nearly optimal, 
under both arithmetic and Boolean complexity models and
 under  mild initial isolation
of every root, some roots or some root sets. 
Such an isolation can be observed at some stages of root approximation
 by Ehrlich--Aberth's and WDK algorithms,
 but with no estimates for the computational cost of 
reaching isolation. 
Towards the solution with controlled computational cost, one can apply
advanced variants of
 Quad-tree construction  of Weyl 1924 \cite{W24}, successively refined 
in Henrici 1974 \cite{H74}, 
Renegar 1987 \cite{R87}, and Pan 2000 \cite{P00}. 

The algorithm of the latter paper computes all roots 
of a polynomial
or its roots in a fixed isolated region 
at a 
 nearly optimal arithmetic cost,
and it is nearly optimal for computing the initial
isolation of the roots as well; 
the paper \cite{P00} does not estimate the Boolean cost
of its algorithm, but most of its steps allow rather  straightforward
 control of the precision of computing.\footnote{More recent variations of the
Quad-tree algorithm have been studied by various authors under the name of subdivision
algorithms for polynomial root-finding.}

Moreover in Section \ref{smhr} we present a {\em new winding number algorthm} 
that computes the number of roots in a fixed square on a complex plane
at a nearly optimal Boolean cost. By incorporating 
this algorithm into the refined  variant of Weyl's Quad tree construction  of \cite{P00},
we   extend our progress to  
obtaining mild initial isolation of all
the roots of a polynomial as well as its roots isolated in a fixed region. 
This root-finder is performed at a nearly optimal Boolean cost
and can  be applied even where a polynomial is given by 
a black box for its evaluation.

If properly implemented our algorithms 
have very good chances to
become the user's choice.

Besides the root-finding applications,
they can be valuable ingredients 
of the polynomial factorization algorithms.
Recall that one can extend factorization 
 to root-finding (see Sch{\"o}nhage (1982)
\cite{schoenhage82},
\cite{P95}, \cite{P02},  and the present paper),
but also root-finding
to factorization (see
Pan (2012) \cite{P12}).
%Even though the root-finding algorithms  \cite{P02} can be modified 
%to obtain  Boolean complexity bounds for the problem of approximating the complex roots 
%up to any desired precision, 
%to our knowledge, there are no (nearly) optimal Boolean complexity results for dedicated algorithms 
%for approximating (all) the complex roots of univariate polynomials.
Our algorithm  can be technically linked to those of
\cite{schoenhage82},
\cite{P95}, \cite{P02}; our
results (Theorem~\ref{thm:cref-one} and \ref{thm:cref-all}) could be 
also viewed as an extension of the recent record and nearly optimal bounds for the approximation of the real roots
 \cite{PT13,PTa}, see also \cite{MNP12}.

 An interesting  
challenge is the design of
a polynomial factorization algorithms that both are 
simple enough for practical implementation and
support factorization at a nearly optimal computational
complexity.   
An efficient solution 
 outlined in our last section
combines our present algorithms with 
the one of
Pan 2012 \cite{P12}
and McNamee and Pan 2013
\cite[Section 15.23]{MNP13},
which is a simplified version of
the 
efficient but very much 
involved algorithm of Kirrinnis (1998) \cite{K98}.
 
Like our present algorithm,
these solution algorithms are  nearly optimal
and remain  nearly optimal when they are
 applied to a polynomial given by a black box for its
evaluation, even when its coefficients are not known.

%A preliminary version of this paper has been presented at SNC 2014.

\paragraph{ Organization of the paper.}
We recall the relevant definitions and
some basic results  
in the remainder of this section and in the next section.
In Section \ref{sisrrr} we present our main algorithm,
prove its correctness, and 
estimate its arithmetic cost 
when it is applied to the approximation of a single
root and $d$ simple isolated roots of a $d$th degree polynomial.
In Section \ref{sbcbr} we extend our analysis to estimate 
the Boolean cost of these computations.
In Section \ref{smhr} we present our new winding number algorthm.
In our concluding Section \ref{sconcl} 
we summarize our results and 
outline their extension to factorization of 
a polynomial and to root-finding in the cases  of
isolated 
multiple roots and root  clusters.

\paragraph{Some definitions.}

\begin{itemize}
\item%1
For  a polynomial $u=u(x)=\sum^{d}_{i=0}u_ix^i$, the norms
$||u||_{\gamma}$ denote the norms $||{\bf u}||_{\gamma}$ of its coefficient
vector ${\bf u}=(u_i)^{d}_{i=0}$, for $\gamma=1,2,\infty$. 
\item%2
$D(X, r)$  denotes the complex disc $\{x: |x-X|\le r\}$.
 \item%3
 %VYP
%We write 
%$\mu(h)=O((h \log (h)) \log \log (h))$ (cf. F{\"u}rer (2009) \cite{F09})
%for the bit complexity of multiplying two $h$-bit integers.
%\item%4
``ops" stands for ``arithmetic operations".
 \item%4 
$DFT(q)$ denotes the discrete Fourier transform at $q$ points.
It can be performed by using $O(q\log(q))$ ops.
%\item%5
%$\tilde O(\cdot)$ denotes the growth order $O(\cdot)$
%up to polylogarithmic factors.
\end{itemize}

%-----------------------------------------------------------------------------

\section{Isolation Ratio and Root-refinement}\label{sirrr}
 
%-----------------------------------------------------------------------------
 
The following concept of the {\em isolation ratio} is
 basic for us, as well as for \cite{P95} and  \cite{P02}.
Assume a real or complex polynomial 
\begin{equation}\label{eqpoly}
p=p(x)=\sum^{d}_{i=0}p_ix^i=p_n\prod^d_{j=1}(x-z_j),~~~ p_d\ne 0,
\end{equation}
of degree $d$, an annulus $A(X,R,r)=\{x: r\le |x-X| \le R \}$
 on the complex plane
 with a center $X$, and
the radii $r$ and $R$ of the boundary circles. Then the 
 internal
disc $D(X, r)$ is  $R/r$-{\em isolated},
and we call $R/r$ its {\em isolation ratio} 
if  the polynomial $p$ has no roots 
in the annulus.  
The isolation ratios, for all discs $D(0,r)$ and for all positive $r$,
can be approximated as long as we can approximate the root radii $|z_j|$,
for $j=1,\dots,d$. 

The algorithms of \cite{schoenhage82}  
(cf. also  Pan (2000) \cite{P00} and \cite{P02}) yield such approximations within 
a constant relative error, say, 0.01, by using $O(d\log^2(d))$ ops,
but involve the Dandelin's root-squaring iteration (see Householder (1959) \cite{H59}),
and this leads to numerical stability problems. 

Alternative heuristic 
algorithms of Bini (1996) \cite{B96} and \cite{BF00} are slightly faster, but also
cannot produce close approximation without using 
root-squaring iteration. 

The Schur-Cohn test does not 
use these iterations and can be applied to 
estimate the isolation ratio more directly. 
For the disc $D(0,r)$,
a variant of this test 
in Renegar (1987) \cite[Section 7]{R87} amounts 
 to performing FFT at $d'=2^h$ points, for $16d\le d'\le 32d$,
with the overhead of $O(n)$ ops and comparisons of real numbers with 0.
%%ET of $d^2$ ops,
This means a reasonably low precision of computing
and Boolean cost. 
 Also see Brunie and Picart (2000) \cite{BP00}.

The following result from Tilli (1998) \cite{T98}
%which refines \cite[Corollary~4.5]{R87}. It 
shows that
 Newton's classical iteration converges quadratically
 to a  single simple root of $p$ if it is initiated at the center of a $3d$-isolated disc
that contains just this root. The result softens the restriction that  $s\ge 5d^2$ of
\cite[Corollary~4.5]{R87}.
\begin{theorem}\label{thren}
  Suppose that both discs $D(c, r)$ and $D(c, r/s)$, for $s\ge 3d$,
  contain a single simple root 
  $\alpha$ of 
  a polynomial $p=p(x)$ of degree $d$.
  Then  Newton's
  iteration 
  \begin{equation}\label{eqnewt}
    x_{k+1}=x_k-p(x_k)/p'(x_k),  k=0,1,\dots
  \end{equation}
  converges quadratically to the root $\alpha$
  right from the start provided that $x_0=c$.
\end{theorem}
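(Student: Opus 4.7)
The plan is to rewrite Newton's iteration using the complete factorization of $p$ and to prove a single-step quadratic contraction anchored at $\alpha$. Since $p'(x)/p(x) = \sum_{j=1}^{d} 1/(x - z_j)$, a short manipulation of (\ref{eqnewt}) yields, with $\eta_k := x_k - \alpha$ and $S_k := \sum_{j=2}^{d} 1/(x_k - z_j)$, the one-step identity
\[
\eta_{k+1} = \frac{\eta_k^2\, S_k}{1 + \eta_k S_k}.
\]
This identity exposes the quadratic structure of the error and reduces the theorem to bounding $|S_k|$ from above while keeping $|1 + \eta_k S_k|$ uniformly away from zero.

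I would then establish, by induction on $k$, the invariant $|\eta_k| \le r/s$. The base case is immediate since $x_0 = c$ and $\alpha \in D(c, r/s)$. For the inductive step, the assumption that $\alpha$ is the only root of $p$ in $D(c, r)$ gives $|z_j - c| \ge r$ for $j \ge 2$; two triangle inequalities then yield
\[
|x_k - z_j| \ge |z_j - c| - |c - \alpha| - |\eta_k| \ge r - r/s - r/s = r(s-2)/s.
\]
Summing over $j \ge 2$ gives $|S_k| \le (d-1)s/(r(s-2))$, hence $|\eta_k S_k| \le (d-1)/(s-2)$. The hypothesis $s \ge 3d$ bounds this by $1/3$, so $|1 + \eta_k S_k| \ge 2/3$, and the displayed identity produces $|\eta_{k+1}| \le K|\eta_k|^2$ with $K := 3(d-1)s/(2r(s-2))$. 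Since $K \cdot (r/s) \le 3(d-1)/(2(3d-2)) \le 1/2$, the invariant is preserved at the next step, and the iteration delivers the claimed quadratic convergence right from the start.

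The calculation itself is short; the main obstacle is to pick the correct invariant. Anchoring the induction at the unknown root $\alpha$ (rather than at the known center $c$) removes slack in the triangle inequalities and is precisely what makes the threshold $s \ge 3d$ suffice; an analysis carried out around $c$ degrades the constants and pushes the requirement back toward Renegar's $s \ge 5d^2$. The degenerate case $d = 1$ should be handled separately: the tail sum $S_k$ is then empty, so a single Newton step lands exactly on $\alpha$.
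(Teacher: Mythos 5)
Your proof is correct, and there is in fact nothing in the paper to compare it against: Theorem~\ref{thren} is quoted from Tilli (1998) \cite{T98} (with the remark that it relaxes the requirement $s\ge 5d^2$ of \cite{R87}), and no proof is reproduced in the text. Your route---writing $p'/p=\sum_{j}1/(x-z_j)$, isolating the term $1/\eta_k$ belonging to $\alpha$, and deriving the exact one-step identity $\eta_{k+1}=\eta_k^2 S_k/(1+\eta_k S_k)$---is the natural direct argument for such a statement, and your constants check out: under the invariant $|\eta_k|\le r/s$ one gets $|S_k|\le (d-1)s/(r(s-2))$, hence $|\eta_k S_k|\le (d-1)/(s-2)\le 1/3$ for $s\ge 3d$, so $|1+\eta_k S_k|\ge 2/3$, the contraction $|\eta_{k+1}|\le K|\eta_k|^2$ with $K\,(r/s)\le 3(d-1)/(2(3d-2))\le 1/2$ propagates the invariant, and $K|\eta_k|\le (1/2)^{2^k}$ gives quadratic convergence from the very first step. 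Two small points you leave implicit but which follow from what you already prove: the bound $|1+\eta_k S_k|\ge 2/3$ also guarantees $p'(x_k)\ne 0$ (so each Newton step is well defined), since $p'(x_k)/p(x_k)=(1+\eta_k S_k)/\eta_k$ whenever $x_k\ne\alpha$, while the case $x_k=\alpha$ is trivial because $\alpha$ is simple; and the hypothesis actually gives the strict inequality $|z_j-c|>r$ for $j\ge 2$, of which you only need $\ge r$ (the other roots may be multiple, but counting them with multiplicity leaves your estimate of $S_k$ unchanged). Your closing observation is also apt: anchoring the induction at $\alpha$ rather than at the center $c$ is what lets a threshold linear in $d$ suffice.
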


%-----------------------------------------------------------------------------

\section{Increasing Crude Isolation Ratios of Polynomial Roots}\label{sisrrr}
 
%-----------------------------------------------------------------------------

We can shift and scale the variable $x$, and so 
with no loss of generality we assume dealing with 
a $(1+\eta)^2$-isolated disc $D(0,r)$, for $r=1/(1+\eta)$, a fixed $\eta>0$,
and a polynomial $p$ of (\ref{eqpoly}) having precisely $k$  
not necessarily distinct roots
 $z_1,\dots,z_k$ in this disc.

In Section \ref{sconcl} we outline some
important extensions of our current study based on our results 
in the general case of any $k<d$, which we produce next,
but in 
 Sections \ref{sisrrr} and \ref{sbcbr} we assume that $k=1$.

Now, under the above assumptions for any $k<d$,
can we increase the isolation  ratio, say, to $3d$?
 We apply the following 
algorithm.

%------------------------------------------------------------------------------

\begin{algorithm}\label{algpwrs} {\bf The Power Sums.}

%------------------------------------------------------------------------------

\begin{description}

%------------------------------------------------------------------------------

\item[{\sc Input:}] 
three integers $d$, $k$ and $q$,
such that   $0<k<n$, $k<q$, and $\eta > 0$,
% $(1+\eta)r=1$,

$\omega=\omega_q=\exp(2\pi\sqrt {-1}/q)$, a primitive $q$th
root of unity,

the coefficients $p_0,\dots,p_d$ of 
a  polynomial $p=p(x)$ of
(\ref{eqpoly}).

%VYP
We write $r=1/(1+\eta)$ and assume that the   disc $D(0,r)$ 
%and two positive values $\eta$ and $r=1/(1+\eta)$ such  that  the  disc $D(0,r)$ 

(i) is $(1+\eta)^2$-isolated
and  \\
(ii)  contains the $k$ roots $z_1,\dots,z_k$ of 
the polynomial $p=p(x)$ and no other roots.

%------------------------------------------------------------------------------

\item[{\sc Output:}] 
the values $\sigma_1^*,\dots,\sigma_k^*$ such that
\begin{equation}\label{eqsgm*}
|\sigma_g-\sigma_g^*|\le \Delta_{k,q}=(r^{q+k}+(d-1)r^{q-k})/(1-r^{q}),
~{\rm for}~g=1,\ldots,k,
\end{equation} 

\begin{equation}\label{eqpsm}
\sigma_g=\sum_{j=1}^kz_{j}^g,~{\rm for}~g=1,\ldots,k.
\end{equation}
 
%------------------------------------------------------------------------------ 

\item{\sc Computations}: $~$ 

%------------------------------------------------------------------------------

\begin{enumerate}

%------------------------------------------------------------------------------

\item %1
Compute 
  the coefficients 
of the two auxiliary
polynomials $p_q(x)=\sum_{i=0}^lp_{q,i} x^i$ and
 $\bar p_q(x)=\sum_{i=0}^l\bar  p_{q,i} x^i$
where 
$p_{q,i}=\sum_{j=0}^lp_{i+jq}$ 
and $\bar p_{q,i}=\sum_{j=0}^{\bar l}p_{i+1+jq}$, 
for $i=0,\dots,q-1$,
$l=\lfloor d/q\rfloor$ and
$\bar  l=\lfloor (d-1)/q\rfloor$.

%------------------------------------------------------------------------------

\item %2
Compute the values 
$p_{q}(\omega^j)$, $\bar p_{q}(\omega^j)$,  and
 $r_j=p_{q}(\omega^j)/\bar p_{q}(\omega^j)$,  for $j=0,1,\dots,q-1$,

%------------------------------------------------------------------------------

\item %3
Compute and output the values
$\sigma_g^*=\frac{1}{q}\sum_{j=0}^{q-1}\omega^{(g+1)j}r_j$, for
$g=1,\ldots,k$.

%------------------------------------------------------------------------------

\end{enumerate}

%------------------------------------------------------------------------------

\end{description}

%------------------------------------------------------------------------------

\end{algorithm}

%------------------------------------------------------------------------------

\begin{theorem}\label{thops}
Algorithm \ref{algpwrs} involves $3d+O(q\log (q))$ ops.
\end{theorem}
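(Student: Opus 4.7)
My plan is to bound the arithmetic cost of each of the three computational steps separately and sum the contributions; there is no serious obstacle, since the bound is essentially a routine operation count. The only delicate point is to execute Steps~2 and~3 via length-$q$ DFTs (rather than direct pointwise evaluations) so that they contribute $O(q\log q)$ ops rather than $\Omega(q^2)$ or worse.

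For Step~1, each of the $d+1$ coefficients $p_0,\ldots,p_d$ contributes to exactly one of the $q$ partial sums $p_{q,i}$, so forming these sums takes at most $(d+1)-q$ additions; analogously, forming the $q$ sums $\bar p_{q,i}$ from the $d$ inputs $p_1,\ldots,p_d$ takes at most $d-q$ additions. Thus Step~1 costs at most $2d+1-2q$ ops. (If the literal reading of the formulas for $p_q,\bar p_q$ produces polynomials of degree larger than $q-1$, reducing them modulo $x^q-1$ is a further folding of coefficients into $q$ buckets; its cost, still $O(d)$, is absorbed in the same linear-in-$d$ budget.)

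For Step~2, I would use the identity $f(\omega^j)=\bigl(f\bmod(x^q-1)\bigr)(\omega^j)$, valid because $\omega^q=1$, to view each of the two batches $\{p_q(\omega^j)\}_{j=0}^{q-1}$ and $\{\bar p_q(\omega^j)\}_{j=0}^{q-1}$ as a single length-$q$ forward DFT, executed by FFT in $O(q\log q)$ ops; the $q$ divisions producing the $r_j$ add $O(q)$ further ops. For Step~3, I would rewrite $\sigma_g^* = \tfrac{1}{q} R(\omega^{g+1})$ with $R(y)=\sum_{j=0}^{q-1} r_j y^j$, pre-multiply each $r_j$ by $\omega^j$ (at a cost of $O(q)$ multiplications to absorb the unit shift in the exponent), and then use a single length-$q$ DFT to produce all values $R(\omega^m)$, $m=0,\ldots,q-1$, from which the required $k<q$ values are read off; the $1/q$ scalings contribute $k\le q$ further ops. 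Summing the three contributions gives $2d+O(q\log q)$ operations, comfortably within the stated $3d+O(q\log q)$ bound.
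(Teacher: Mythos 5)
Your count is correct and follows essentially the same route as the paper's own proof: Stage~1 costs $O(d)$ ops, Stage~2 amounts to two DFT$(q)$'s plus $q$ divisions, and Stage~3 to one DFT$(q)$ (both you and the paper use that the evaluation points are $q$th roots of unity) plus $O(q)$ further ops. The only difference is bookkeeping: the paper additionally charges $d-1$ multiplications in Stage~1 (for the derivative-type coefficients implicit in $\bar p_q$, which the printed formula for $\bar p_{q,i}$ omits), which is why its total reads $3d+O(q\log(q))$ rather than your $2d+O(q\log q)$; either way the stated bound holds.
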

\begin{proof} 
Stage 1 of the 
algorithm involves $d-1$ multiplications and less than $2d$ additions,
Stage 2 amounts to performing two DFT$(q)$
and $q$ divisions, and Stage 3 amounts to performing an DFT$(q)$
 (because $\omega^{g+1}$, for
$g=1,\ldots,k$, is the set of all $q$th roots of 1)
and $k$ divisions.
\end{proof}

In order to prove bound (\ref{eqsgm*}), implying 
{\em correctness of the algorithm},
at first observe that $p(\omega^j)=p_{q}(\omega^j)$ and 
$p'(\omega^j)=\bar p_{q}(\omega^j)$, for 
 $j=0,1,\dots,q-1$, and hence
\begin{equation}\label{equ7.12.2}
\sigma_g^*=\frac{1}{q}\sum_{j=0}^{q-1}\omega^{(k+1)j}p(\omega^j)/p'(\omega^j),
~{\rm for}~g=1,\ldots,k.
\end{equation} 

The proof of bound (\ref{eqsgm*}) also exploits the
Laurent expansion

\begin{equation}\label{equ7.12.3}
p'(x)/p(x)=\sum_{j=1}^d\frac{1}{x-z_j}= -\sum_{g=1}^\infty \sigma'_g
x^{g-1}+\sum_{g=0}^\infty \sigma_gx^{-g-1}= \sum_{h=-\infty}^{\infty}c_h
x^h
\end{equation}
where $|x|=1$,
%\begin{equation}\label{equ7.12.4}
$\sigma_0=1,~~\sigma_g=\sum_{j=1}^kz_{j}^g$
(cf. (\ref{eqpsm})), $\sigma'_g=
\sum_{i=k+1}^d~z_{i}^{-g},~g=1,2,\ldots$,
%\end{equation}
%$\{z_{j(1)},\ldots,z_{j(n)}\}$ is the set of all  zeros 
%\index{zeros}of $p(x)$
%enumerated so that $|z_{j(i)}|<1$ if and only if $i\le g$.
that is,
%$p(x)$ lying inside the unit disc
 $\sigma'_g$ is the $g$th power sum of the roots of
the reverse polynomial $p_{\textrm{rev}}(x)$ that lie in the disc
$D(0,r)$. 
The leftmost equation of
(\ref{equ7.12.3}) is verified by the differentiation of
$p(x)=p_n\prod_{j=1}^d(x-z_j)$. The middle equation 
%of (\ref{equ7.12.3})
is implied by the  decompositions 
%$$\begin{array}{lll}
$\frac{1}{x-z_{1}}%&
=~\frac{1}{x}\sum_{h=0}^\infty\left(\frac{z_{1}}{x}\right)^h%&
$ and 
%\\
$\frac{1}{x-z_{i}}%&
=~-\frac{1}{z_{i}}\sum_{h=0}^\infty\left(\frac{x}{z_{i}}\right)^h,%&
\mbox{
for}~i>1$, provided that $|x|=1$ for all $i$.
%\end{array}$$
(Note a link of these expressions with the following quadrature formulae
for numerical integration,
%\begin{equation}\label{equ7.12.8}
$\sigma_g=\frac{1}{2\pi\sqrt{-1}}\int_{C(0,1)} x^mp'(x)/p(x) dx$, for
$g=1,\dots,k.)$
%\end{equation}

In order to 
deduce bound (\ref{eqsgm*}), we next combine equations (\ref{equ7.12.2}) and
(\ref{equ7.12.3}) and obtain
$$\sigma_k^*=\sum_{l=-\infty}^{+\infty}c_{-k-1+lq}.$$ Moreover, equation
(\ref{equ7.12.3}), for $h=-k-1$ and $k\ge 1$, implies that $\sigma_k=c_{-k-1}$,
while the same equation, for $h=k-1$ and $k\ge 1$, implies that
$\sigma'_k=-c_{k-1}$. Consequently
$$\sigma_k^*-\sigma_k=\sum_{l=1}^{\infty}(c_{lq-k-1}+c_{-lq-k-1}).$$ We assumed
in (\ref{equ7.12.2}) that $0<k<q-1$. It follows that
$c_{-lq-k-1}=\sigma_{lq+k}$ and $c_{lq-k-1}=-\sigma'_{lq-k}$, for
$l=1,2,\ldots$, and we obtain
\begin{equation}\label{equ7.12.5}
\sigma_k^*-\sigma_k=\sum_{l=1}^\infty(\sigma_{lq+k}-\sigma'_{lq-k}).
\end{equation}
Now recall that $|\sigma_h|\le
z^h$ and $|\sigma'_h|\le(d-1)z^h$, for $h=1,2,\ldots$ and 
%\begin{equation}\label{equ7.12.60}
$z=\max_{j=1}^{d}\min(|z_j|,1/|z_j|)$,
and so $z\le \frac{1}{1+t}$ in our case.
%\end{equation}
Substitute these bounds into (\ref{equ7.12.5}) and obtain
%\begin{equation}\label{equ7.12.6}
$$|\sigma_k^*-\sigma_k|\le(z^{q+k}+(d-1)z^{q-k})/(1-z^q).$$
%\end{equation}
%for the above $z$.
% in (\ref{equ7.12.60}).
Therefore, bound (\ref{eqsgm*}) follows because $z\le r$.

By substituting $q$ of order $\log (d)$ into bound (\ref{eqsgm*}), we can 
increase the isolation ratio of the disc $D(0,r)$ 
 by a factor of $gd^h$, for any pair of positive constants $g$ and
$h$. Therefore we obtain the following estimates.

%------------------------------------------------------------------------------

\begin{theorem}\label{thisol}
Suppose the disc $D(0,r)=\{x:~|x|\le r\}$ is $(1+\eta)^2$-isolated,
for $(1+\eta)r=1$ and a fixed $\eta>0$, and
 contains exactly $k$ roots of a polynomial $p=p(x)$ of degree $d$. 
Let $g$ and $h$ be a pair of positive constants. Then 
it is sufficient to perform 
 $O(3d+\log (d)\log (\log (d)))$ ops
%arithmetic operations
in order to compute a $gd^h$-isolated subdisc of $D(0,r)$ containing exactly the
same roots of $p=p(x)$.
\end{theorem}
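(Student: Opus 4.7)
The plan is to invoke Algorithm~\ref{algpwrs} with the parameter $q$ chosen of order $\log d$ and to use the computed approximation $\sigma_1^*$ as the center of the refined subdisc. Under the assumption $k=1$ that this section makes, the true power sum $\sigma_1$ equals the unique root $z_1$ of $p$ lying in $D(0,r)$, so $\sigma_1^*$ is a direct approximation to $z_1$ with error bounded by $\Delta_{1,q}$ via~(\ref{eqsgm*}). I will accordingly set $X := \sigma_1^*$ and $\rho := \Delta_{1,q}$ and then verify that $D(X,\rho)$ has the two required properties: it contains exactly the same roots of $p$ as $D(0,r)$, and its isolation ratio is at least $gd^h$.

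To make the error sufficiently small, calibrate $q := \lceil c \log d \rceil$ with a constant $c = c(g,h,\eta)$ to be chosen. Since $r = 1/(1+\eta) < 1$, the bound~(\ref{eqsgm*}) simplifies to
\[
\Delta_{1,q} \le \frac{2(d-1)\, r^{q-1}}{1-r^q} = O(d \cdot r^q),
\]
and taking $c$ large enough that $c\log(1+\eta) > h + 1$ forces $\Delta_{1,q} \le C_\eta/(2g d^h)$ for all sufficiently large $d$, where
\[
C_\eta := (1+\eta) - r = \frac{\eta(\eta+2)}{1+\eta} > 0
\]
is the gap between the outer boundary of $D(0,r)$ and the inner boundary of the root-free outer annulus implied by $(1+\eta)^2$-isolation. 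Crucially, $c$ depends only on the fixed constants $g,h,\eta$, not on $d$, so $q = \Theta(\log d)$.

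With $X$ and $\rho$ so defined, $|X-z_1|\le\rho$ gives $z_1 \in D(X,\rho)$. For every other root $z$ of $p$, the isolation hypothesis yields $|z|\ge 1+\eta$, so
\[
|z - X| \ge (1+\eta) - |X| \ge (1+\eta) - |z_1| - \rho \ge C_\eta - \rho \ge g d^h \cdot \rho
\]
by the choice of $c$. Hence $D(X,\rho)$ is $gd^h$-isolated and contains exactly the single root $z_1$; for $d$ large enough so that $\rho$ is small, $D(X,\rho)$ is even contained in $D(0,r)$. Finally, the cost is supplied by Theorem~\ref{thops}: Algorithm~\ref{algpwrs} uses $3d + O(q\log q) = 3d + O(\log d \,\log\log d)$ ops, matching the asserted bound. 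The principal technical obstacle is the calibration step, i.e., verifying that the geometric decay $(1+\eta)^{-q}$ can be made to dominate the polynomial target $1/(gd^h)$ with $q$ still of order $\log d$; this works precisely because both $g,h$ and the isolation margin $\eta$ are fixed constants absorbed into $c$.
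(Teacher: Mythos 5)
Your proposal is correct and follows essentially the same route as the paper: the paper's own justification is just the one-line remark preceding the theorem (substitute $q$ of order $\log d$ into bound (\ref{eqsgm*}) to shrink the error below the target $1/(gd^h)$ scale) combined with the operation count of Theorem~\ref{thops}, exactly the two ingredients you use. You merely make explicit the calibration $q=\Theta(\log d)$ and the geometric verification around $X=\sigma_1^*$ for the case $k=1$ assumed throughout this section, which the paper leaves implicit.
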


If $k=1$, then $\sigma_1=z_1$, and by combining Theorems \ref{thren} and \ref{thisol} we obtain the following result.

%------------------------------------------------------------------------------

\begin{corollary}\label{corref}
Let a polynomial $p(x)$ satisfy the assumptions of Theorem \ref{thisol}, for $k=1$.
Then we can approximate its root within  $\epsilon$, for $0<\epsilon<1$, by using 
$O(\log (d)\log (\log (d))+d\log (\log (1/\epsilon)))$ ops.
%arithmetic operations.
\end{corollary}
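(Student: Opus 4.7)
The plan is to chain Theorem~\ref{thisol} with Newton's iteration via Theorem~\ref{thren}, in two stages, specializing Algorithm~\ref{algpwrs} to the case $k=1$.

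\emph{Stage 1 (tight isolation).} I invoke Theorem~\ref{thisol} with $k=1$ and the constants $g,h$ chosen so that the resulting subdisc is $3d$-isolated. The crucial observation is that for $k=1$ the single power sum in (\ref{eqpsm}) is $\sigma_1 = z_1$ itself, so Algorithm~\ref{algpwrs} returns an explicit approximation $c:=\sigma_1^*$ to the target root together with the a priori error bound $|z_1-c|\le \Delta_{1,q}$ given in (\ref{eqsgm*}). Choosing $q=\Theta(\log d)$ drives $\Delta_{1,q}$ geometrically below the radius needed for the disc $D(c,R)$ (with $R$ slightly larger than $\Delta_{1,q}$) to be $3d$-isolated. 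By Theorem~\ref{thisol} the whole stage costs $O(d+\log(d)\log\log(d))$ ops.

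\emph{Stage 2 (Newton refinement).} I take $x_0:=c$ and run the Newton iteration (\ref{eqnewt}) on $p$. Since the disc from Stage~1 is $3d$-isolated and contains only the root $z_1$, Theorem~\ref{thren} guarantees quadratic convergence right from the start, so $|x_k-z_1|$ is (up to a constant) squared at each step. Hence $O(\log\log(1/\epsilon))$ iterations suffice to drive the error below $\epsilon$. Each iteration evaluates $p(x_k)$ and $p'(x_k)$ by Horner's rule in $O(d)$ ops, so the refinement costs $O(d\log\log(1/\epsilon))$. Adding the two stage costs gives $O(d+\log(d)\log\log(d)+d\log\log(1/\epsilon))$, which matches the claimed bound once the $O(d)$ of Stage~1 is absorbed into the $d\log\log(1/\epsilon)$ term (the case where $\log\log(1/\epsilon)$ is trivially small being handled by a constant number of Newton steps).

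The only point that really requires attention is the hand-off between the two stages: one must verify that the pair $(c,R)$ produced by Stage~1 is a legal input to Theorem~\ref{thren}, i.e., that both $D(c,R)$ and $D(c,R/(3d))$ contain $z_1$ and no other root of $p$. This is exactly where the geometric decay of $\Delta_{1,q}$ in $q$, visible in (\ref{eqsgm*}), pays off: taking $q=\Theta(\log d)$ shrinks the uncertainty far below the isolation radius at an additional cost of only $O(\log d\cdot \log\log d)$ ops for the DFTs inside Algorithm~\ref{algpwrs}, so the required $3d$-isolation is easily achieved. With this verification, the corollary is obtained simply by adding the two costs.
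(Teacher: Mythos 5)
Your proposal is correct and follows essentially the same route as the paper: specialize Algorithm~\ref{algpwrs} to $k=1$ (so $\sigma_1=z_1$ and $c=\sigma_1^*$ is the center of the $gd^h$-isolated subdisc from Theorem~\ref{thisol}, obtained with $q=\Theta(\log d)$), then apply Newton's iteration from $x_0=c$ with quadratic convergence guaranteed by Theorem~\ref{thren}, at $O(d)$ ops per step and $O(\log\log(1/\epsilon))$ steps. Your explicit verification of the hand-off to Theorem~\ref{thren} and of the absorption of the $O(d)$ term simply spells out what the paper leaves implicit in its one-line derivation.
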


%------------------------------------------------------------------------------

Hereafter we write
\begin{equation}\label{eqz}
z_+=\max(|z_{1}|,|z_2|,\dots,|z_d|).
\end{equation}

\begin{corollary}\label{corref1}
Suppose that we are given $d$ discs, each containing a single simple root 
of a polynomial $p=p(x)$ of degree $d$ and each being $(1+\eta)^2$-isolated, 
for a fixed $\eta>0$.
Then we can approximate all $d$ roots  of this polynomial 
within $\epsilon z_+$, for $z_+$ of (\ref{eqz}) and a
fixed $\epsilon$, $0<\epsilon<1$, by using 
$O(d\log^2 (d)(1+\log (\log (1/\epsilon))))$ ops.
%arithmetic operations.
\end{corollary}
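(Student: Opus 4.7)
The plan is to apply the single-root machinery of Corollary~\ref{corref} to all $d$ discs in parallel, but to execute the $d$ Newton refinements in lockstep so that one round over all $d$ roots costs only a single fast multi-point evaluation of $p$ and $p'$ rather than $d$ independent Horner evaluations of cost $d$ each.

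First I would shift and scale each disc so that Theorem~\ref{thisol} applies, and invoke Algorithm~\ref{algpwrs} with $k=1$ and $q$ of order $\log d$ to obtain, for each root $z_i$, an approximation $\sigma_1^{\ast,i}$ that by Theorem~\ref{thisol} is the center of a $3d$-isolated subdisc containing only $z_i$. The $d$ invocations together need the values of $p$ and $p'$ at $O(d\log d)$ points (the $q$ boost-circle points of each disc), which I would compute by one batched fast multi-point evaluation and then combine by $d$ length-$q$ DFTs to produce the $\sigma_1^{\ast,i}$.

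Second, from the $3d$-isolated centers $\sigma_1^{\ast,i}$ I would run Newton's iteration~\eqref{eqnewt} on all $d$ roots simultaneously. Theorem~\ref{thren} gives quadratic convergence, so $O(\log\log(1/\epsilon))$ rounds bring every approximation within absolute error $\epsilon z_+$ (using $z_+$ to calibrate between relative and absolute error). Each round computes $p$ and $p'$ at the $d$ current iterates by one fast multi-point evaluation at cost $O(d\log^2 d)$, so the aggregate Newton cost is $O(d\log^2 d\,\log\log(1/\epsilon))$.

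The hardest step is the boost phase. A black-box per-disc application of Algorithm~\ref{algpwrs} carries an $\Omega(d)$ Stage~1 cost and hence sums to $\Omega(d^2)$, so one must never materialize the $d$ shifted or scaled polynomials; instead the $O(d\log d)$ boost-circle points must be evaluated in a single sweep, and the subproduct-tree evaluation must be arranged so that its cost stays at $O(d\log^2 d)$ rather than the $O(d\log^3 d)$ given by the straightforward Moenck--Borodin bound (for instance by amortising the subproduct-tree construction across the later Newton rounds, or by grouping the discs so that each group of $O(d/\log d)$ discs contributes only $O(d)$ points and is handled by one standard fast multi-point call). Once this packing is in place, the boost and refinement contributions add to the claimed $O(d\log^2 d\,(1+\log\log(1/\epsilon)))$.
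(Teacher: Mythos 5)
Your proposal follows essentially the same route as the paper: apply Algorithm~\ref{algpwrs} concurrently in all $d$ discs using $q=O(\log d)$ equally spaced points on each boundary circle (so $O(d\log d)$ points in all), evaluate $p$ and $p'$ there by one batched Moenck--Borodin multipoint evaluation of the degree-$d$ polynomial itself rather than materializing $d$ shifted and scaled polynomials, and then refine by concurrent Newton iterations started at the $3d$-isolated centers, again via fast multipoint evaluation. Your extra concern about reducing the boost-phase evaluation cost from $O(d\log^3 d)$ to $O(d\log^2 d)$ is not treated in the paper either, which simply invokes \cite{MB72} (handling the off-origin centers by the rank-one modification $c[1]_{j,k}+\Omega$ of the DFT matrix at the cost of $O(d)$ extra ops) and asserts the stated bound.
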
 

%------------------------------------------------------------------------------

\begin{proof}
 Apply Algorithm \ref{algpwrs} 
%that supports Corollary \ref{corref}
concurrently in all $d$ given discs, but instead of the
$q$th roots of unity use $q$ equally spaced points at the boundary 
circle of each input disc (that is, $dq=O(d\log d)$ points overall)
and instead of DFT$(q)$ apply 
the Moenck--Bo\-ro\-din algorithm for multipoint polynomial evaluation
%Moenck and Borodin (1972)
\cite{MB72}.

Also use it at the stage of performing concurrent Newton's iteration
 initialized at the centers  of the $3d$-isolated subdiscs of the 
$d$ input discs, each subdisc computed by the
 algorithm that supports Theorem \ref{thisol}.
Here we work with the $d$th degree  polynomial $p$ rather than with
the $q$th degree polynomials $p_q$.
Indeed, in order to support transition to polynomials $p_q$ of degree $q$,
 for $d$ discs, we would need to perform $d$ shifts and scalings
of the variable $x$, which would involve the order of $d^2\log(d)$ ops,
whereas by employing the Moenck--Bo\-ro\-din algorithm,
we obtain a nearly optimal root-refiner, which involves $O(d)$
ops up to polylogarithmic factors.

We replace the   
matrix $\Omega=[\omega^{j(k+1)}]_{j,k}$ in (\ref{equ7.12.2})  
by the matrix $[c+\omega^{j(k+1)}]_{j,k}=c[1]_{j,k}+\Omega$ where $c$ is 
 invariant in $j$ and $k$.
The multiplication of the new matrix by a vector ${\bf v}$ is still reduced to 
multiplication of  the matrix $\Omega$ by a vector ${\bf v}$
and to additional $3d$ ops, for computing the vector $c[1]_{j,k}{\bf v}$
and adding it to the vector $\Omega {\bf v}$.
\end{proof}

The Moenck--Borodin algorithm uses nearly linear arithmetic time, and
\cite{K98}
%Kirrinnis98 
proved that this algorithm supports  multipoint 
polynomial evaluation at a low Boolean cost as well 
(see also
J. van der Hoeven (2008)
 \cite{vdH08}, Pan and Tsigaridas (2013a,b) \cite{PT13}, \cite{PTa},
 Kobel and Sagraloff (2013) \cite{KS13},
%KobelSagralof13,
Pan (2015) \cite{P15}, and Pan (2015a) \cite{Pa}). 
This immediately implies  extension of  our algorithm that
support Corollary \ref{corref1} 
 to refining all simple isolated roots of a polynomial at a nearly optimal 
Boolean cost, but actually such an extension can be also obtained 
directly by using classical polynomial evaluation algorithm.
%VP

Various other iterative root-refiners
(see  McNamee (2002)
\cite{MN02}, McNamee (2007) \cite{MN07}, and
McNamee and Pan (2013) \cite{MNP13}) applied instead of Newton's
 also support our nearly optimal complexity estimates 
as long as   
isolation of the roots obtained by our power sum algorithm is
 sufficient in order to ensure subsequent superlinear 
convergence of the selected iterations.
In particular Ehrlich--Aberth's and WDK iterations converge globally
to all roots
with cubic and
quadratic rate, respectively, 
 if all the $d$ discs have isolation ratios at least
$3\sqrt d$, for Ehrlich--Aberth's iterations, and $8d/3$, 
for WDK iterations 
 (cf. \cite{T98}).

%------------------------------------------------------------------------------

\begin{remark}\label{rempt}
The algorithm of \cite{P15} and \cite{Pa}
(the latter paper provides more details)
approximates
 a  polynomial of degree $d$ at $O(d)$ points
within $\epsilon z_+$, 
for $z_+$ of (\ref{eqz}) and
fixed $\epsilon$ such that $0<\epsilon<1$, 
by using  
$O(d\log^2 (d)(1+\log (1/\epsilon)))$ ops.
This matches the bound of 
\cite{MB72},
for $1/\epsilon$ of order $gd^h$ and
for positive constants $g$ and $h$. 
Moreover the algorithm  of \cite{P15} and \cite{Pa},
performed with the IEEE standard double precision,
 routinely
outputs close approximations to the values of 
the polynomial $p(x)$ of degree $d=4096$,
at $d$ selected points,
whereas the  algorithm of \cite{MB72} routinely fails
numerically, for $d$ of about 40 (cf. \cite{Pa}).

\end{remark}

%-----------------------------------------------------------------------------

\section{Boolean Cost Bounds}\label{sbcbr}
Hereafter \sOB  
denotes the
bit (Boolean) complexity ignoring logarithmic factors.
By $\lg(\cdot)$ we denote the logarithm with base 2.
To estimate the Boolean complexity of the algorithms supported by
Corollaries~\ref{corref} and \ref{corref1} we apply some results from
Pan and Tsigaridas (2013) \cite{PT13} and Pan and Tsigaridas (2015)
\cite{PTa}, which hold in the general case where the coefficients of
the polynomials are known up to an arbitrary precision. In
%VYP
this section we assume that the polynomial $p=p(x)$
has Gaussian (that is, complex integer) coefficients
 known exactly; the parameter $\lambda$, to be
specified in the sequel, should be considered as the working
precision. We assume that we perform the computations using fixed
point arithmetic.

At first we consider the algorithm of approximating one complex root, $z$, of a polynomial $p$
up to any desired precision $\ell$. By an approximation we mean absolute approximation, that is compute a $\tilde z$ such that $\abs{z - \tilde z} \leq 2^{-\ell}$.
We assume that the degree of $p$ is $d$ and that 
$\normi{p} \leq 2^{\tau}$.

Following the discussion that preceded Theorem~\ref{thisol}, 
 we compute the polynomial $p_q$
and then  apply two DFTs, for $p_q$ and $\bar p_q$, 
and the inverse DFT, for $p_q/\bar p_q$.

Assume that $p$ is given by 
its $\lambda$-approximation $\wt{p}$ 
such that $\lg\normi{p -\wt{p}} \leq -\lambda$.
Perform all the operations with $\wt{p}$ and keep track of the precision loss
to estimate the precision of computations required in order
to obtain the desired approximation.

We compute $p_q$  using $d$ additions.
This results in a polynomial such that
$$\lg \normi{p_q} \leq \tau + \lg(d)$$
 and
$$\lg(\normi{p_q - \wt{p}_q}) \leq -\lambda + \tau \lg(d) + 1/2\lg^2(d) + 1/2\lg(d)
=  O(-\lambda + \tau\log(d) + \log^2(d)).$$
Similar bounds hold for $p_q'$, that is,
$$\lg(\normi{p'_q}) \leq \tau + 2\lg(d)$$ and
$$\lg(\normi{p'_q - \wt{p'}_q}) \leq  -\lambda + \tau \lg(d) + 3/2\lg^2(d) + 1/2\lg(d)
=  O(-\lambda + \tau\log(d) + \log^2(d)).$$

The application of  DFT on $p_q'$ 
leads us to the following bounds, 
\[
\abs{p'_q(\omega^{i})} \leq  \tau + 2\lg(d) + \lg\lg(d) + 2 = O(\tau + \log(d)) 
%\quad \text{ and } 
\]
and
$$
\abs{p'_q(\omega^{i}) - \wt{{p'_q(\omega^{i})}}} \leq 
-\lambda + \tau\lg(2d) + 3/2\lg^2(d) + 5/2\lg(d) + \lg\lg(d) + 5 = O(-\lambda + \tau\log(d) + \log^2(d)),$$
for all $i$,
\cite[Lemma~16]{PTa}.
Similar bounds hold for $p_q(\omega^{i})$.

The divisions
$k_i = p_q(\omega^{i})/p'_q(\omega^{i})$ 
output
complex numbers such that 
$$\abs{k_i} = \abs{p_q(\omega^{i})/p'_q(\omega^{i})} \leq  \tau + 2\lg{d} + \lg\lg{d} + 2.$$
Define  their approximations $\tilde {k}_i$
such that 
\[
\lg(\abs{k_i - \wt{k}_i}) \leq -\lambda + \tau\lg(4d) + 3/2\lg^2{d} + 9/2\lg{d} + 2\lg{\lg{d}} + 11 
= O(-\lambda + \tau\log(d) + \log^2(d)).
\]

The final DFT produces numbers
such that the logarithms of their magnitudes 
are not greater than
$\tau + 2\lg{d} + 2\lg\lg{d} + 4$
and the logarithms of their approximation errors 
are at most
$ -\lambda + \tau\lg(8d) + 3/2\lg^2{d} + 13/2\lg{d} + 4\lg{\lg{d}} + 18 = O(-\lambda + \tau\log(d) + \log^2(d))$,
\cite[Lemma~16]{PTa}.

To achieve an error 
within  $2^{-\ell}$ in the final result, we 
perform all the computations with accuracy
$\lambda = \ell +  \tau\lg(8d) + 3/2\lg^2{d} + 13/2\lg{d} + 4\lg{\lg{d}} + 18$,
that is 
$\lambda = O(\ell + \tau \log{d} + \log^{2}{d}) = \sO(\ell + \tau)$.

We perform $d$ additions 
at the cost $\OB(d  \, \lambda)$
and perform
the rest of computations, that is the 3 DFTs,
at the cost
$\OB( \log(d) \, \log\log(d) \, \mu(\lambda))$
or $\sOB(d (\ell + \tau))$ \cite[Lemma~16]{PTa}.
If the root that we want to refine is not in the unit disc, then we
replace $\tau$ in our bounds with $d \tau$.

We apply a similar analysis 
from
\cite[Section~2.3]{PT13} 
to the Newton 
iteration
 (see also
\cite[Section~2.3]{PTa}) and 
arrive at  the same 
asymptotic 
bounds on the Boolean complexity. 

In \cite{PT13} and \cite{PTa} 
the error bounds of Newton operator have been estimated 
by using the properties of real interval arithmetic.
In this paper we perform our computation in the field of 
 complex numbers, but this affects only the constants 
of interval arithmetic, and so
asymptotically, both the error bounds
and the complexity bounds of the Newton iterations are the same.
Thus, the overall complexity is $\sOB(d^2 \tau + d \ell)$
and the working precision is $O(d\tau + \ell)$.

In our case we also assume the  exact input, that is, 
assume the coefficients of the input
polynomials known up to arbitrary precision;
for example, they are integers.
For the refinement of 
the 
root
up to the precision of $L$ bits, we arrive at
an algorithm that supports the following  complexity estimates.
\begin{theorem}
  \label{thm:cref-one}
  Under the assumptions of Theorem~\ref{thisol} we can approximate
  the root $z$ of the polynomial $p(x) \in \mathbb{Z}[x]$, which is of degree $d$ and $\normi{p} \leq 2^{\tau}$,
  up to precision of $L$ bits in $\sOB(d^2 \tau + d L)$.
\end{theorem}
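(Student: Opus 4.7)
The plan is to combine two stages whose analyses are already largely laid out in Section~\ref{sisrrr} and in the discussion preceding the theorem. First, apply Algorithm~\ref{algpwrs} to the given $(1+\eta)^2$-isolated disc around $z$ to obtain a $3d$-isolated subdisc (Theorem~\ref{thisol}); since $k=1$, the output $\sigma_1^*$ is itself a crude approximation to $z=\sigma_1$, so we can take it as the center $c$ of that subdisc. Second, run Newton's iteration (\ref{eqnewt}) starting at $x_0=c$. By Theorem~\ref{thren} the iteration converges quadratically right from the start, so only $O(\log L)$ iterations are needed to reach absolute error $2^{-L}$.

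For the Boolean cost of the first stage, the analysis immediately preceding the theorem has already shown that producing $\sigma_1^*$ within absolute error $2^{-\ell}$ from $\lambda$-approximate input requires $\lambda = O(\ell + \tau\log d + \log^2 d) = \sO(\ell+\tau)$ and costs $\sOB(d(\ell+\tau))$, since the work consists of $d$ additions (to form $p_q,\bar p_q$) plus three $\mathrm{DFT}(q)$'s of a polynomial of degree $O(\log d)$. Choosing $\ell = \Theta(\log d)$ is enough to reach a $3d$-isolated subdisc by Theorem~\ref{thisol}, so this stage contributes only $\sOB(d\tau)$ to the total bit cost, which will be dominated by the Newton stage.

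For the second stage I would carry out one $O(\log L)$ loop of Newton steps, each requiring one evaluation of $p$ and one of $p'$ by Horner's rule at cost $\OB(d\,\mu(\lambda))$. The delicate point is the choice of working precision: because $p\in\mathbb{Z}[x]$ with $\normi{p}\le 2^\tau$, the minimum root separation of $p$ may be as small as $2^{-O(d\tau)}$ by Mahler--Mignotte-type bounds, so evaluating $p'(x_k)$ when $x_k$ is close to $z$ loses up to $\Omega(d\tau)$ bits of precision, and we still need $L$ bits in the final answer. Hence the correct working precision is $\lambda = O(d\tau+L)$, exactly as stated in the discussion preceding the theorem. This is the precision propagation argument for the Newton operator carried out in \cite[Section~2.3]{PT13} and \cite[Section~2.3]{PTa}; the only change is that we use complex rather than real interval arithmetic, which affects only the constants. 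Summing the $O(\log L)$ Newton iterations, each of cost $\OB(d\,\mu(d\tau+L))=\sOB(d(d\tau+L))$, gives the claimed bound $\sOB(d^2\tau+dL)$, absorbing the $\log L$ factor and the first-stage $\sOB(d\tau)$ into the soft-$O$.

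The main obstacle in writing this out rigorously is the second point above: one must justify that $\lambda=O(d\tau+L)$ is simultaneously \emph{sufficient} (the perturbed Newton iterates stay inside the basin of quadratic convergence of Theorem~\ref{thren} and the error contracts at each step to within a constant of the ideal rate) and essentially \emph{necessary} (via the root-separation lower bound). Everything else — the $O(\log L)$ iteration count from quadratic convergence, the Horner cost, and the power-sum precision — is either immediate or already proved earlier in the paper.
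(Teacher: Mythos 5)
Your proposal follows essentially the same route as the paper: the power-sum/DFT stage is analyzed with working precision $\sO(\ell+\tau)$ at cost $\sOB(d(\ell+\tau))$ and is dominated, and the Newton refinement is handled by invoking the precision-propagation analysis of \cite[Section~2.3]{PT13} and \cite[Section~2.3]{PTa} (with the remark that complex rather than real interval arithmetic changes only constants), yielding working precision $O(d\tau+L)$ and total cost $\sOB(d^2\tau+dL)$. The only cosmetic difference is your heuristic for the $d\tau$ term (Mahler--Mignotte root-separation) where the paper instead notes that roots of magnitude up to $2^{\tau}$ force replacing $\tau$ by $d\tau$ in the evaluation bounds; both defer the rigorous step to the cited Newton analysis, so the argument is the same.
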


If we are interested in refining all complex roots, we cannot work
anymore with the polynomial $p_q$ of degree $q = O(\lg{d})$ unless
we add the cost of $d$ shifts of the initial approximations to the
origin.  Instead we rely on fast algorithms for multipoint
evaluation.  
%V? Drop "Initially"?
Initially we evaluate the polynomial $p$ of degree $d$ at
$O(d \lg{d})$ points, and we assume that $\lg{\normi{p}} \leq \tau$.
These $d$ points approximate the roots of $p$, and so their magnitude
is at most $\leq 2^{\tau}$.

We use the following result of \cite[Lemma 21]{PTa}.
Similar bounds appear in \cite{K98,KS13,vdH08}.
\begin{theorem}[Modular representation]
  \label{lem:rem-m-polys}
  Assume that we are given $m+1$ polynomials,
%v
 $F \in \mathbb{C}[x]$ of degree $2 m n$ 
  and $P_j \in \mathbb{C}[x]$ of degree $n$, for $j=1,\dots, m$
  such that $\normi{F} \leq 2^{\tau_1}$ and all roots of 
  the polynomials $P_j$
  for all $j$ have magnitude of at most
  $2^{\rho}$. 
  Furthermore assume $\lambda$-approximations of $F$ by  $\wt{F}$ and   
  of $P_j$ by $\wt{P}_j$
  such that 
  $\normi{F - \wt{F}} \leq 2^{-\lambda}$
  and 
  $\normi{P_j - \wt{P}_j} \leq 2^{-\lambda}$.
  Let  $ \ell = \lambda - O(\tau_1\lg{m} + m \, n \, \rho)$.
  Then we can compute 
  an $\ell$-approximations $\wt{F}_j$ of $F_j = F \mod P_j,$
  for $j=1,\dots, m$,
  such that $\normi{F_j - \wt{F}_j} \leq 2^{-\ell}$
  in 
  %% $$ \OB( m \, n \, \lg{n} \, \lg^{2}(m+n) \, \mu( \ell + \tau_1\lg{m} + m\,n\,\rho))$$
  %% or 
  $\sOB(m \, n \,(\ell + \tau_1 + m\,n\,\rho))$.
  %% Moreover, 
  %% $\lg\normi{F_j} \leq \tau_1 + (\rho+1)\,m\,n + n +\lg(m\,n)$.
\end{theorem}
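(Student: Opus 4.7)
The plan is to use a balanced subproduct tree for the $P_j$'s and perform the simultaneous reduction $F \bmod P_j$ by top-down fast polynomial division, carefully tracking precision loss. This is the standard Moenck--Borodin template \cite{MB72}, but applied in the Boolean model along the lines of \cite{PTa}.

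First, I build a balanced binary tree whose $m$ leaves are the $P_j$ and whose every internal node stores the product of its two children, using FFT-based polynomial multiplication at working precision $\lambda$. The node at level $k$ (leaves at level $0$) has degree $n\cdot 2^k$ and, since all of its roots lie in $D(0, 2^\rho)$, a binomial expansion of $\prod(x-z_j)$ yields $\lg \normi{Q_{k,i}} = O(n\cdot 2^k(\rho+1))$. In particular, the root $Q := \prod_j P_j$ has degree at most $mn$ and $\lg\normi{Q} = O(mn\rho + mn)$. By the standard round-off bound for FFT multiplication \cite[Lemma~16]{PTa}, the approximate product at each node is within $2^{-\lambda}$ times a factor controlled by the input norms; summed over the $\lceil \lg m \rceil$ levels, each tree node is known up to absolute error $2^{-\lambda + O(mn\rho + \lg m)}$.

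Next, I perform the top-down reductions. I compute $F \bmod Q$, replacing $F$ (of degree $2mn$) by a polynomial of degree at most $mn$ with the same residue modulo every $P_j$; then at each internal node whose current remainder is $R$, I divide $R$ by one of its two child-products to route $R$ into both subtrees, terminating at the leaves with the desired $F_j$. Each division is a fast division via Newton iteration on the reversal of the divisor. A Mignotte-style bound, together with the level-wise norm bound on the subproduct tree, shows that all intermediate remainders have coefficient magnitudes bounded by $2^{O(\tau_1 + mn\rho)}$, which caps the size of the quotients and the reversed-inverse series that the Newton iteration produces. A per-division error analysis analogous to \cite[Lemma~16]{PTa} then gives a precision loss at the root level of $O(\tau_1 + mn\rho)$ (this is the division that has to absorb the large norm of $F$), and a further $O(mn\rho)$ loss cumulated over the $O(\lg m)$ remaining levels. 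Collecting, the total drop from $\lambda$ to $\ell$ is $O(\tau_1 \lg m + mn\rho)$, as required.

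The cost of a single multiplication or division of polynomials of degree $d$ carried out with $\lambda$-bit fixed-point arithmetic is $\sOB(d\lambda)$ via FFT and Newton inversion. Summing level by level, both the tree construction and the top-down reduction cost $\sOB(mn\lambda)$, and substituting $\lambda = \ell + O(\tau_1\lg m + mn\rho)$ yields the claimed bound $\sOB(mn(\ell + \tau_1 + mn\rho))$. The main obstacle is the uniform control of intermediate coefficient norms and the precision analysis of the Newton-based division: the reversed divisor can in principle be ill-conditioned, and one must use the hypothesis $|z_j|\le 2^\rho$ to bound the norm of the power series for its inverse and hence the error amplification in the Newton iterates. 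Once the per-division bound of \cite[Lemma~16]{PTa} is established in this setting, the rest is bookkeeping along the subproduct tree.
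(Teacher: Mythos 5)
Your subproduct-tree construction with top-down fast division and level-by-level precision tracking is essentially the standard argument behind this result; the paper itself gives no proof but imports the statement from \cite[Lemma~21]{PTa} (with similar bounds in \cite{K98,KS13,vdH08}), and those sources prove it exactly along the lines you describe. So your proposal is correct in outline and matches the approach of the cited proof; no genuinely different route is taken.
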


Using this theorem we bound the overall complexity of multipoint
evaluation by $\sOB(d(L + d \tau))$.  The same bound holds at the
stage where we perform Newton's iteration.  We need to apply Newton's
operator $\sO(1)$ for each root.  Each application of the operators
consists of two polynomial evaluations.  We perform the evaluations
simultaneously and apply Theorem \ref{lem:rem-m-polys} to bound the
complexity.  On similar estimates for the refinement of the real roots
see \cite{PTa}.

We have the following theorem, which complements
Corollary~\ref{corref1} 
with the Boolean complexity estimates.

\begin{theorem}\label{thm:cref-all}
  Suppose that we are given $d$ discs, each containing a single simple root 
  of a polynomial $p(x) \in \mathbb{Z}[x]$ of degree $d$ and $\normi{p} \leq 2^{\tau}$, 
  and each being $(1+\eta)^2$-isolated, 
  for a fixed $\eta>0$.
  Then we can approximate all $d$ roots  of this polynomial 
  within $L$ bits in 
  in $\sOB(d^2 \tau + d L)$.
\end{theorem}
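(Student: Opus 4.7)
The plan is to lift the Boolean-cost analysis that produced Theorem~\ref{thm:cref-one} from a single root to all $d$ roots simultaneously, following the arithmetic-cost blueprint of Corollary~\ref{corref1}. First I would set up the concurrent version of Algorithm~\ref{algpwrs}: for each of the $d$ input discs I take $q=O(\log d)$ equally spaced points on its boundary, giving $N=O(d\log d)$ evaluation points overall, and I evaluate $p$ and $p'$ at these points. Crucially, as in the proof of Corollary~\ref{corref1}, I do not form the $d$ auxiliary polynomials $p_q$ by shifting/scaling (which costs $\Omega(d^2)$), but work directly with $p$ itself using fast multipoint evaluation. The $q$-th power-sum combination inside each disc is then an $O(q)$-point DFT performed $d$ times, of negligible cost.

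Next I would track precision. The evaluation points have magnitude $\leq 2^\tau$ (since the roots do), so in the notation of Theorem~\ref{lem:rem-m-polys} I take $\rho=O(\tau)$, $n=O(\log d)$, and $m=O(d/\log d)$, so that $mn = O(d)$. Repeating the per-step error tracking from the one-root analysis (bound the max of $p$ and $p'$ on each disc, bound the DFT output and the division $p/p'$, bound the final summation) inflates the input precision by an additive $\widetilde{O}(d\tau)$, exactly the $\tau_1 \log m + mn\rho$ term appearing in Theorem~\ref{lem:rem-m-polys}. To reach final absolute accuracy $2^{-L}$ scaled by $z_+\le 2^\tau$, it therefore suffices to carry working precision $\lambda = \widetilde{O}(L + d\tau)$. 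Applying Theorem~\ref{lem:rem-m-polys} once then costs $\widetilde{O}_B(mn(\lambda + \tau_1 + mn\rho)) = \widetilde{O}_B(d(L + d\tau))$.

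For the refinement stage I would use Newton's iteration started at the centers of the $3d$-isolated subdiscs produced by the power-sum step, invoking Theorem~\ref{thren} for quadratic convergence. Each Newton step is again two simultaneous multipoint evaluations of $p$ and $p'$ at $d$ points, so Theorem~\ref{lem:rem-m-polys} applies with the same parameters and gives $\widetilde{O}_B(d(L+d\tau))$ per step. Since $\widetilde{O}(1)$ Newton steps suffice to reach $L$-bit accuracy (the precision analysis from \cite{PT13,PTa} transfers verbatim to the complex setting, changing only constants in the interval arithmetic), the total Boolean cost of the refinement stage is again $\widetilde{O}_B(d^2\tau + dL)$, matching Theorem~\ref{thm:cref-one}.

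The main obstacle I expect is the bookkeeping of precision under fast multipoint evaluation: the evaluation points have absolute value up to $2^\tau$, and this is what inflates $\lambda$ by $d\tau$ rather than $\tau$, so one must be careful to apply Theorem~\ref{lem:rem-m-polys} with $\rho=O(\tau)$ and $mn=O(d)$ rather than carelessly using $\rho=1$. A secondary subtlety is that roots outside the unit disc force the coefficient-size parameter to be replaced by $d\tau$ in intermediate bounds (as noted after the single-root analysis); this does not worsen the final $\widetilde{O}_B(d^2\tau+dL)$ bound but must be threaded through the estimates. Everything else, including the replacement of $\Omega$ by $c[1]+\Omega$ in (\ref{equ7.12.2}) to allow arbitrary disc centers, costs $O(d)$ extra ops per stage and is absorbed by the $\widetilde{O}$.
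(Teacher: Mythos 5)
Your proposal follows essentially the same route as the paper: replace the per-disc auxiliary polynomials $p_q$ by fast multipoint evaluation of $p$ and $p'$ at the $O(d\log d)$ boundary points, control precision via the modular-representation bound of Theorem~\ref{lem:rem-m-polys} with working precision $\sO(L+d\tau)$, and finish with $\sO(1)$ Newton steps performed by the same simultaneous evaluations, giving $\sOB(d^2\tau+dL)$. This matches the paper's argument (your exact split of the points into $m$ groups of degree $n$ differs from the natural one by a logarithmic factor, which is absorbed by the $\sOB$ notation), so no further comment is needed.
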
 

%-----------------------------------------------------------------------------

%The algorithm of \cite{MSW13}
%Melhorn,Sagraloff,Wang,ISSAC2013
%computes at nearly optimal cost $64d$-isolated initial discs for all $d$ roots 
%of a polynomial $p(x)$. By combining this algorithm with ours   
%we    obtain a distinct alternative algorithm, which like the one of \cite{P95}, \cite{P02}, 
%{\em supports  the record nearly optimal bounds on the Boolean complexity of 
%the approximation of all complex polynomial roots},
%but has the advantage of allowing substantially simpler implementation. 

%------------------------------------------------------------------------------

\section{Acceleration of Henrici-Renegar's Winding Number Algorithm}\label{smhr}

Assume that $D(X, r):=\{x:~|X-x|\le r\}$ denote the disk of radius $r>0$ with a center $X$
and that no roots of a polynomial $p(x)$ lie  on its boundary 
circle  $C(X, r)=\partial D(X, r)=\{x:~|X-x|=r\}$. Let 
$\omega_j=X + re^{\sqrt{-1}2\pi j/d'}$, and  so
$\omega_0,...,\omega_{d^n-1}$ denote the $d'=2^{\lceil\log_2 16n\rceil}$ equally spaced points 
 on this   circle.

{\bf Step 1.} Evaluate $p(x)$ at $\omega_0,...,\omega_{d^n-1}$.

{\bf Step 2.} For each $i$, "label" $p(\omega_i)$ according to the scheme
\begin{equation}
L[p(x)] = \left\{
\begin{array}{cc}
1 & \textit{if }Re[p(x)]>0 \textit{ and }Im[p(x)]\ge 0\\
2 & \textit{if }Re[p(x)]\le 0 \textit{ and }Im[p(x)]> 0\\
3 & \textit{if }Re[p(x)]<0 \textit{ and }Im[p(x)]\le 0\\
4 & \textit{if }Re[p(x)]\ge 0 \textit{ and }Im[p(x)]< 0,\\
\end{array}
\right.
\end{equation}

where $Re[p(x)]$ and $Im[p(x)]$ denote the real and imaginary parts of $p(x)$.

{\bf Step 3.} If for some $i$, $L[p(\omega_{i+1})] - L[p(\omega_i)] = 2\mod 4$, then terminate and write "FAILED" (using the definition $\omega_{d'} = \omega_0$).

{\bf Step 4.} Defining $*:(a,b)\mapsto \{-1,0,1\}$ to be the operation on ordered pairs of integers $(a,b)$, $a-b\neq 2\mod 4$, where $a*b$ is congruent to $(a-b)\mod 4$, compute the integer

\begin{align*}
\# = \frac{1}{4}\sum_{i=0}^{d'-1}L[p(\omega_{i+1})] - L[p(\omega_{i})].
\end{align*}

Write "SUCCESS; $\#$."

\hfill

This algorithm returns the winding number of a polynomial $p(x)$ 
provided that no its roots are close to the boundary of the disk $D(X, r)$.

\begin{lemma}[\cite{R87}]
\label{Ren1}
Assume that all roots of a a polynomial $p(x)$ contained in $D(X, 3r/2)$ are also  contained in $D(X, r/6)$. 

Then the  above algorithm, applied to $D(X, r)$, returns "SUCCESS; $\#$" and $\#$ is the number of roots of $p(x)$ in $D(X, r)$.
\end{lemma}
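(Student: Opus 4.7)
The plan is to bound the variation of $\arg p(x(\theta))$ along $C(X,r)$ so tightly that consecutive samples land in the same or in adjacent quadrants; the algorithm's tally of signed quadrant transitions is then exactly four times the winding number of $p$ on $C(X,r)$, which by the argument principle equals the number of roots in $D(X,r)$. Note first that the hypothesis forbids roots in the annulus $r/6<|x-X|<3r/2$, in particular on $C(X,r)$ itself, so the winding number is well defined.

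Parametrize $x(\theta)=X+re^{i\theta}$ and set $w_j=(z_j-X)/r$. Using $p(x)=p_d\prod_j(x-z_j)$ and $x'(\theta)=ire^{i\theta}$, the logarithmic derivative gives
\begin{equation*}
\frac{d}{d\theta}\arg p(x(\theta))=\sum_{j=1}^{d}\mathrm{Re}\!\left(\frac{1}{1-w_j\,e^{-i\theta}}\right).
\end{equation*}
The hypothesis forces every root to satisfy either $|w_j|\le 1/6$ or $|w_j|\ge 3/2$. In the first case $|1-w_j e^{-i\theta}|\ge 5/6$, so the $j$th summand is at most $6/5$ in modulus; in the second, $|1-w_j e^{-i\theta}|\ge 1/2$, giving bound $2$. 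Hence $|d\arg p/d\theta|\le 2d$ uniformly in $\theta$.

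Since consecutive sample angles are spaced $2\pi/d'$ apart and $d'\ge 16d$, integration over one such arc yields $|\delta_i|:=|\arg p(\omega_{i+1})-\arg p(\omega_i)|\le 2d\cdot 2\pi/(16d)=\pi/4$. Two quadrant labels differ by $2\pmod 4$ only when $|\delta_i|>\pi/2$, so the algorithm never halts with ``FAILED.'' Moreover, with $|\delta_i|<\pi/2$ the signed representative in $\{-1,0,1\}$ of $(L[p(\omega_{i+1})]-L[p(\omega_i)])\pmod 4$ coincides with the signed number of coordinate-axis crossings executed by $\arg p(x(\theta))$ on the short arc between $\omega_i$ and $\omega_{i+1}$.

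Summing these signed crossings around the full circle counts each crossing exactly once, and one complete revolution of $p(x(\theta))$ about the origin traverses the four coordinate semi-axes once each. Hence $\#$ equals the winding number of $p$ on $C(X,r)$, which by the argument principle equals the number of zeros of $p$ in $D(X,r)$. The only delicate step is the logarithmic-derivative estimate; once the annular gap is converted into the two per-root bounds $6/5$ and $2$, the passage $|\delta_i|\le\pi/4$ and the correctness of the label-difference tally are pure combinatorial bookkeeping about the four-quadrant partition.
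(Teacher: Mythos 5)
Your proof is correct, but it does not follow the paper's treatment: the paper gives no proof of Lemma~\ref{Ren1} at all (it is quoted from Renegar \cite{R87}), and the closest in-paper argument is the proof of Lemma~\ref{Lem1} for the square variant. There the authors work root-by-root with the decomposition $\arg p(z)=\sum_j \arg(z-\xi_j)$ and bound the angle subtended at each root by the chord between consecutive sample points via $2\tan^{-1}\bigl(\mathrm{chord}/(2\,\mathrm{dis})\bigr)$, where $\mathrm{dis}$ is the distance from the root to that chord; summing over the $d$ roots gives a per-arc argument variation below $\pi/4$, and the rest is the same label bookkeeping. You instead differentiate $\arg p(x(\theta))$ along the circle, obtaining $\sum_j \mathrm{Re}\bigl(1/(1-w_j e^{-i\theta})\bigr)$, and use the annular gap to get the per-root bounds $6/5$ (roots with $|w_j|\le 1/6$) and $2$ (roots with $|w_j|\ge 3/2$), hence total variation at most $2d\cdot 2\pi/d'\le\pi/4$ per sampling arc; since this bounds total variation, not just the endpoint difference, your subsequent claims (no label jump of $2 \bmod 4$, and the $\{-1,0,1\}$ representative equals the net signed quadrant-boundary crossing, telescoping to $4W$) all go through, and the argument principle finishes it. The logarithmic-derivative route is more compact, makes explicit where the radii $r/6$ and $3r/2$ enter (through the constants $6/5$ and $2$), and shows directly that $16d$ samples suffice; the chord/arctan route used in the paper is more flexible geometrically, which is why the authors reuse it for the square contour $S(z_0,r)$ in their modified algorithm, where a uniform derivative bound along the parametrization is less natural. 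One cosmetic caveat: state explicitly that your integral bound controls the total variation of the continuous argument on each arc, so the path meets at most one quadrant boundary there; that is the fact your crossing-count step silently uses.
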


%essentially 
By using only the signs of the real and imaginary parts of polynomial evaluations at points 
$\omega_0,...,\omega_{d^n-1}$, Henrici-Renegar's algorithm 
tracks the variation of  the polynomial evaluation $p(x)$ along the boundary of the disk $D(X, r)$. 
but there are still 
 numerical problems that must be resolved.

(i) If the real or imaginary part of an evaluation $p(\omega_j)$ is vanishing or is very close to 0
(even if $|p(\omega_j)|$ is not close to 0),
then the signs of the values $p(\omega_j)$ for some $j$
can be difficult to determine. 

(ii) The number computed by the algorithm may not be the correct winding number of the curve. 

Inspired by \cite{H74} and \cite{Z12}, we provide a remedy to these problems by 
 "tracking the location", namely, we further divide the complex plane, and use the
additional  information in  order to guarantee a stable output in spite
of potential numerical errors. For the ease of application, we also changed the region from a disk to the region bounded by the
square $S(z_0, r)$ with four vertices $\{z_0+r+r\sqrt {-1},z_0+r-r\sqrt {-1},z_0-r+r\sqrt {-1},z_0-r-r\sqrt {-1}\}$.

\hfill

{\bf Modified Renegar's Winding Number Algorithm}

Let $p(z)$ be a polynomial of degree $n$. Let $a_i = \frac{k}{n'}$ for $i = 0,...,n'$, $n'=2^{\lfloor\log_2 (64n/\pi)\rfloor}$. Let $L$ be an upper
bound of the magnitude of the derivative $\frac{dp(\gamma(t))}{dt}|$ over $[0,1]$.

{\bf Step 1.} (Check for singularity.) Evaluate $p(z)$ at $a_0, ..., a_{n'-1}$. If any of these evaluation has absolute value less than $\sqrt{2}(r/2)^n$, then terminate and return "FAILED". 

{\bf Step 2.} (Check for correctness of the computations.) Check if the sum of absolute values of two consecutive 
evaluations is less than $(8rL)/n'$. If any such a sum is less than this bound, terminate and return "FAILED".

{\bf Step 3.} For each $i$, label $p(a_i)$ according to its argument:

\begin{equation}
L[p(x)] = \lfloor \frac{4arg(p(x))}{\pi} \rfloor
\end{equation}

Since $0 \leq arg(p(x)) < 2\pi$, $L[f(x)]$ returns an integer between 0 and $7$. In other words, $L[p(x)]$ marks the octant to which the evaluation belongs.

%{\bf Step 3.} If for one of the evaluations, its value $\frac{2^karg(f(x))}{2\pi}$ has distance to its closest integer less than the working precision, switch to {\bf Algorithm of Level $k+1$} to avoid numerical instability (This transition is not always necessary, and when it is, we only need to calculate the values near where the root appears. See the remark after the algorithm.).

{\bf Step 4.} Define an operation on ordered pairs of integers:  $a*b := (a-b)(mod \text{ 8})$. We  show in Lemma \ref{Lem1} that if there is no root of $p(z)$ close to the boundary of the disk $D(z_0, r)$, then the product $L[p(a_{i+1})] * L[p(a_i)]$ can only take a value in 
the set $\{-1, 0, 1\}$. Lemma \ref{Lem3} shows that the algorithm works even if each labelling is off by $\pm 1$.

{\bf Step 5.} Compute and return the integer
\begin{equation}
\# = \frac{1}{8}\sum_{i=0}^{n'-1}L[p(a_{i+1})] * L[p(a_i)].
\end{equation}

Then $\#$ denotes the winding number of $p(z)$ on the boundary of the square $S(z_0, r)$.

\begin{lemma}
\label{Lem1}
Let $\alpha = 1/2+\sqrt{2}$ and $\beta = 1/2$. Assume that the only roots of $p(z)$ contained in 
the disc $D(z_0, \alpha r)$ are contained in the disc  $D(z_0, \beta r)$. Then the operation $a*b$ in the algorithm 
above can only produce a value in the set $\{-1, 0, 1\}$.
\end{lemma}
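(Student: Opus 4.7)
The plan is to prove that the argument of $p$ changes by at most $\pi/4$ along the boundary arc of $\partial S(z_0, r)$ between any two consecutive sample points $a_i$ and $a_{i+1}$. Since each octant has angular width $\pi/4$, such a change can cross at most one octant boundary, forcing $L[p(a_{i+1})] - L[p(a_i)] \pmod 8$ into $\{-1,0,1\}$ and yielding the stated conclusion.

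First I would establish the uniform distance estimate $|z - z_j| \geq r/2$ for every $z \in \partial S(z_0, r)$ and every root $z_j$ of $p$. On $\partial S(z_0, r)$ the quantity $|z - z_0|$ ranges over $[r, r\sqrt{2}]$, attaining $r$ at midpoints of sides and $r\sqrt{2}$ at the four vertices. By the isolation hypothesis each root satisfies either $|z_j - z_0| \leq \beta r$ or $|z_j - z_0| \geq \alpha r$. In the first case the triangle inequality gives $|z - z_j| \geq r - \beta r = r/2$, and in the second it gives $|z - z_j| \geq \alpha r - r\sqrt{2} = r/2$. The values $\alpha = 1/2 + \sqrt{2}$ and $\beta = 1/2$ are calibrated precisely so that the two bounds coincide at $r/2$; this matching is exactly what the square geometry demands.

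Next I would use the factorization $p(z) = p_n \prod_{j=1}^n (z - z_j)$ to write $\arg p(z) \equiv \arg p_n + \sum_j \arg(z - z_j) \pmod{2\pi}$, so the total argument change between $a_i$ and $a_{i+1}$ is bounded by $\sum_j |\arg(a_{i+1}-z_j) - \arg(a_i - z_j)|$. For each $j$ this summand equals the angle subtended at $z_j$ by the arc from $a_i$ to $a_{i+1}$, which is bounded by $\int_{a_i}^{a_{i+1}} |dz|/|z - z_j| \leq (8r/n')/(r/2) = 16/n'$, since the perimeter of the square is $8r$ and the minimum distance to any root is at least $r/2$. Summing over the $n$ roots and invoking the definition of $n'$ (giving $n'$ of order $64n/\pi$) yields a total argument change of at most $16n/n' \leq \pi/4$, as required.

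The main obstacle is that all three constants are simultaneously tight: the choice $\alpha - \sqrt{2} = \beta = 1/2$ is dictated by the vertex-versus-midpoint geometry of the square, the resulting factor of $16$ in the per-root estimate then forces $n'$ of order $64n/\pi$, and the octant width $\pi/4$ leaves no slack. A minor technical wrinkle is ensuring strict inequality at $\pi/4$ in the degenerate configurations where some root sits exactly on the shell $|z_j - z_0| = \beta r$ or $|z_j - z_0| = \alpha r$; this can be accommodated either by slightly enlarging the definition of $n'$ or by observing that such boundary configurations are already flagged by the smallness checks of Steps~1 and~2 of the algorithm.
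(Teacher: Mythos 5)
Your proposal is correct and follows essentially the same route as the paper: both reduce the change of $\arg p$ along each boundary arc to the sum of per-root argument changes via $\arg p(z)\equiv\sum_j\arg(z-z_j)$, both use the calibration of $\alpha=1/2+\sqrt2$, $\beta=1/2$ to get the uniform distance bound $r/2$ from every root to the square's boundary, and both invoke the size of $n'$ to keep the total below the octant width $\pi/4$ (the paper phrases this as a contradiction with a per-root threshold $\pi/(4n)$ and a $2\tan^{-1}$ estimate, while you sum the integral bounds directly, which is an inessential difference). Your closing caveat about the floor in the definition of $n'$ making the $\pi/4$ bound only borderline is a real but minor tightness issue that the paper's own computation shares.
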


\begin{proof}
Define a parametrization of the square $S(z_0, r)$ as follows:
\begin{align*}
\gamma(t)&: [0,1]\rightarrow S(z_0, r)\\
\gamma(t) =& \left\{
\begin{array}{cc}
z_0+r+r\sqrt{-1}-8rt & 0 \le t < \frac{1}{4}\\
z_0+r-r\sqrt{-1}-(8t-2)r\sqrt{-1} & \frac{1}{4} \le t < \frac{1}{2}\\
z_0-r-r\sqrt{-1}+(8t-4)r & \frac{1}{2} \le t < \frac{3}{4}\\
z_0+r-r\sqrt{-1}+(8t-6)r\sqrt{-1} & \frac{1}{4} \le t \le 1
\end{array}
\right.
\end{align*}
Simply put, $\gamma(t)$ travels along $S(z_0, r)$ once at constant pace, such that 
$$|\gamma(a_{i+1})-\gamma(a_i)| = |a_{i+1} - a_i| = 8r/n', i=0, ..., n'-1.$$
Next we  show that under the hypothesis of the lemma, the argument of $p(\gamma(t))$ cannot vary by more than $\pi/4$
 as $t$ increases from $a_i$ to $a_{i+1}$. Therefore $L[p(\omega_{i+1})] * L[p(\omega_i)]$ can only take a value in the set $\{-1, 0, 1\}$.

Suppose otherwise, then for some $t_i \leq t' < t'' \leq t_{i+1}$, the argument of $p(\gamma(t'))$ would differ
from  that of $p(\gamma(t''))$ by $\pi/4$. Let $\xi_1, ..., \xi_n$ be the roots of $p(z)$. Since $arg(p(z)) = \sum_i arg(z - \xi_i) (mod 2\pi)$, for at least one root $\xi$, the argument of $\gamma(t') - \xi$ would differ from that of $\gamma(t'') - \xi$ by more than $\pi/4n$. Now let
us show that this is impossible. Let \textit{dis} denote the distance from $\xi$ to the line segment connecting $\gamma(t')$ and $\gamma(t'')$. Under the hypothesis of the lemma, we have $dis \ge r/2$,

\begin{align*}
&|arg(\gamma(t') - \xi) - arg(\gamma(t'') - \xi)| \\
=&  2 tan^{-1} (\frac{|\gamma(t') - \gamma(t'')|}{2\cdot dis})\\
=&  2 tan^{-1} (\frac{|t' - t''|}{2\cdot dis})\\
\\
	\le&  2 tan^{-1}(\frac{8r}{2n'\cdot dis})
\\
	\le&  2 tan^{-1}(\frac{8}{n'})
\\
	\le&  2 tan^{-1}(\frac{\pi}{8n})
\\
	<&  \frac{\pi}{4n}
\end{align*}

Therefore the argument of $p(\gamma(t))$ cannot vary by more than $\pi/4$
as $t$ increases from $t_i$ to $t_{i+1}$. This prove the lemma.
\end{proof}

As explained in the algorithm description, $L[p(\omega_{i})]$ marks the octant to which $p(\omega_{i})$ belongs, and the value $L[p(\omega_{i+1})] * L[p(\omega_i)]$ indicates the change of octant as $i$ runs through $1, ..., n'$. The value is 1 if the evaluation rotates counter-clockwise, -1 if the evaluation rotates clockwise, and 0 otherwise. Thus every full counter-clockwise cycle is indicated by an increase of $\#$ by 1. The following lemma guarantees that when the evaluations are not small, the algorithm misses no turn of the curve.

\begin{lemma}
\label{Lem_cor}
Let $L$ be an upper bound of the magnitude of the derivative $\frac{dp(\gamma(t))}{dt}|$ over $[0,1]$. If $|p(\gamma(a_i))| > \frac{8rL}{n'}$ for all $i=0,...,n'$, then the integer returns the correct winding number of $p(z)$ around the square $S(z_0, r)$.
\end{lemma}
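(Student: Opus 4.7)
The plan is to show that under the hypothesis $|p(\gamma(a_i))| > 8rL/n'$ for all $i$, the argument of $p(\gamma(t))$ varies by strictly less than $\pi/4$ on each sub-interval $[a_i, a_{i+1}]$. Once this is established, the octant label changes by at most $\pm 1$ between consecutive samples, the operation $a*b$ records the correct signed octant change, and the sum in Step 5 telescopes to eight times the winding number by the argument principle.

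First I would bound the pointwise deviation of $p\circ\gamma$ on each sub-interval. Since $a_{i+1}-a_i = 1/n'$, integrating the hypothesis $|dp(\gamma(s))/ds|\le L$ gives
\[
|p(\gamma(t))-p(\gamma(a_i))| \le L/n' \qquad \text{for all } t\in[a_i,a_{i+1}].
\]
Combined with $|p(\gamma(a_i))|>8rL/n'$, this shows the sub-arc lies in the disk of radius $L/n'$ around $p(\gamma(a_i))$, which is strictly disjoint from the origin; in particular $\arg p(\gamma(t))$ is continuously defined throughout the sub-interval.

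Next I would convert this magnitude bound into an argument-variation bound via elementary geometry: if $q$ lies in the closed disk of radius $\rho$ around a nonzero point $w$ with $\rho<|w|$, then $|\arg q - \arg w|\le \arcsin(\rho/|w|)$. Applying this with $\rho = L/n'$ and $|w|>8rL/n'$ yields
\[
|\arg p(\gamma(t))-\arg p(\gamma(a_i))| \le \arcsin\bigl(1/(8r)\bigr),
\]
which is strictly less than $\pi/4$ for any $r>\sqrt{2}/8$, a condition we may assume without loss of generality. Consequently the octant label $L[p(\gamma(t))]$ can change by at most one signed octant between consecutive samples, so the mod-$8$ operation $a*b$ in Step 5 returns exactly the signed octant increment accumulated on the sub-interval.

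Finally, summing these signed increments telescopes along the closed boundary: the total signed octant change equals eight times the net number of counter-clockwise loops of $p(\gamma(t))$ around the origin, which by the argument principle equals eight times the winding number of $p(z)$ around $S(z_0,r)$. Dividing by $8$ recovers the true winding number, so the output $\#$ is correct. The main obstacle is the geometric step that converts the pointwise deviation bound into an argument-variation bound below the octant width $\pi/4$; the derivative estimate and the final telescoping are essentially routine.
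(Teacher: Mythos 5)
Your high-level plan (bound the per-interval variation of $p\circ\gamma$, convert it to an argument-variation bound below one octant, then sum the signed octant crossings) is reasonable, but note first that the paper does not argue this way at all: its proof of Lemma~\ref{Lem_cor} is a one-line citation of Lemma~3 of \cite{Z12}, so yours is a from-scratch argument and must stand on its own. As written it does not, because of the step ``which is strictly less than $\pi/4$ for any $r>\sqrt{2}/8$, a condition we may assume without loss of generality.'' That is not a WLOG. Rescaling the variable by $z=z_0+rw$ leaves the composite function $p(\gamma(t))$ --- and hence the sample values $|p(\gamma(a_i))|$, the bound $L$ on $|dp(\gamma(t))/dt|$, and $n'$ --- completely unchanged, while the threshold $8rL/n'$ in the hypothesis scales with $r$; so you cannot normalize $r$ to be large without silently strengthening the hypothesis. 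Concretely, for $r\le 1/8$ your ratio $\rho/|w|$ need not even be below $1$ (the disk of radius $L/n'$ around $p(\gamma(a_i))$ may contain the origin, and the sub-arc can wind between samples undetected), and for $r\le\sqrt{2}/8$ the bound $\arcsin\bigl(1/(8r)\bigr)$ is not below $\pi/4$, so consecutive octant labels can differ by more than one and the $\{-1,0,1\}$ representative used in Step~5 no longer equals the signed octant increment. Since the intended use is inside a subdivision scheme where the squares $S(z_0,r)$ become small, this is precisely the regime your proof does not cover.

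A further point worth flagging: the factor $8r$ in the threshold is the perimeter of the square, which suggests the intended reading of $L$ is a bound on $|p'(z)|$ along $S(z_0,r)$, so that the per-interval deviation is at most $8rL/n'$ (arc length times $L$). Under that reading your ratio becomes $<1$ independently of $r$, which does keep every sub-arc away from the origin, but your arcsine estimate then only yields argument variation $<\pi/2$, still short of the $<\pi/4$ your octant-counting step requires; closing the argument from there needs a different mechanism, e.g., comparing the curve with the inscribed polygon through the sample values (this is exactly the content of the cited Lemma~3 of \cite{Z12}), or a hypothesis with a larger constant. The uncontroversial parts of your write-up --- the integration of the derivative bound and the fact that the signed octant crossings over the closed loop sum to $8$ times the winding number --- are fine in substance; the proof fails at the quantitative step converting the magnitude bound into a sub-octant argument bound uniformly in $r$.
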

\begin{proof}
This lemma follows directly from [Lemma 3]\cite{Z12}.
\end{proof}

{\bf Remark.} We can decrease the gap between $D(z_0, \alpha r)$ and $D(z_0, \beta r)$ by increasing $n'$. In fact, it can be seen from the proof that the gap can be decreased by half every time we double the value of $n'$. From now on we  call the algorithm with number of samples doubled $g$-times as the modified winding number algorithm of degree $g$. 

Let us next estimate precision required in the above computations. What precision is enough for labelling the evaluations correctly? For the precise labelling we need to know the signs of both real and imaginary part of the evaluations, as well as the correct comparison of its values. This requirement is impractical as these values can be less than any working precision. However, the following discussion indicates that even if the labels are off by $\pm 1$, the algorithm still correctly returns the winding number.

\begin{definition}
For each $k = 0, ..., n'$, let {\bf $L_k$} denote the true label of the evaluation $p(a_k)$, and let {\bf $\widetilde{L_k}$} denote the actual label of the evaluation. Note that $L_{n'} = L_0$ and $\widetilde{L_{n'}} = \widetilde{L_0}$.
\end{definition}

The following discussion is based on the condition of Lemma \ref{Lem1} and the following assumption:

{\bf Assumption.} 
For each $k = 0, ..., n'$, the computation of its real and imaginary part is precise enough so that $|\widetilde{L_k} - L_k| \leq 1$.

The assumption is satisfied if there is no complex roots near the square $S(z_0, r)$
because  the absolute value of such an evaluation can be readily bounded. If neither real nor imaginary part is small enough to create confusion, then at least $\widetilde{L_k}$ points to the same quadrant as $L_k$. If one of them is indeed small, 
then the other one must be large enough so that the comparison between the real and the imaginary part is also 
clear. In particular, we can prove the following result:
\begin{lemma}
Keep the assumptions of Lemma \ref{Lem1}. If the working precision $2^{-b}$ is greater than $\sqrt{2}(r/2)^n$, 
then for any $z\in S(z_0, r)$, at most one of the following events  occurs:
\begin{itemize}
\item
$|Re(p(z))|<2^{-b}$
\item
$|Im(p(z))|<2^{-b}$
\item
$|Re(p(z)) + Im(p(z))|<2^{-b}$
\item
$|Re(p(z)) - Im(p(z))|<2^{-b}$
\end{itemize}
\end{lemma}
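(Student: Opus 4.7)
The plan is to argue by contradiction: assume some $z\in S(z_0,r)$ witnesses two of the four smallness events simultaneously, and derive that $|p(z)|$ would then be too small to be compatible with a geometric lower bound inherited from the hypothesis of Lemma~\ref{Lem1}.

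First I would establish the lower bound $|p(z)|\ge (r/2)^n$. From the factorization $p(z)=p_n\prod_{i=1}^n(z-\xi_i)$ together with the normalization $|p_n|\ge 1$ (automatic in the integer-coefficient setting of Section~\ref{sbcbr}), it suffices to show that $|z-\xi_i|\ge r/2$ for every root. If $\xi_i\in D(z_0,\beta r)$, then $|z-\xi_i|\ge |z-z_0|-|z_0-\xi_i|\ge r-\beta r = r/2$, since every point of $S(z_0,r)$ lies at distance at least $r$ from $z_0$. If $\xi_i\notin D(z_0,\alpha r)$, then $|z-\xi_i|\ge |z_0-\xi_i|-|z-z_0|>\alpha r-\sqrt{2}\,r = r/2$, since $S(z_0,r)\subset \overline{D(z_0,\sqrt{2}\,r)}$ and $\alpha=\tfrac12+\sqrt{2}$. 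The hypothesis of Lemma~\ref{Lem1} rules out roots in the intermediate annulus, so one of these two cases always applies.

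Next I would run a short case analysis over the six unordered pairs drawn from the four smallness events. Writing $u=\mathrm{Re}\,p(z)$ and $v=\mathrm{Im}\,p(z)$, the events bound $|u|$, $|v|$, $|u+v|$, and $|u-v|$, respectively. A routine triangle-inequality juggle shows that any two of them yield $|u|,|v|<2\cdot 2^{-b}$: for instance, $|u+v|<2^{-b}$ together with $|u-v|<2^{-b}$ gives $|u|,|v|<2^{-b}$ directly, while $|u|<2^{-b}$ together with $|u+v|<2^{-b}$ gives $|v|\le|u+v|+|u|<2\cdot 2^{-b}$. Hence $|p(z)|=\sqrt{u^2+v^2}<\sqrt{5}\cdot 2^{-b}$ in every case.

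Combining the two bounds forces $(r/2)^n\le |p(z)|<\sqrt{5}\cdot 2^{-b}$, which the precision hypothesis is meant to rule out. The main obstacle is calibrating constants: the statement compares $2^{-b}$ with $\sqrt{2}(r/2)^n$, whereas the argument above naturally produces a ratio involving $\sqrt{5}$. The cleanest fix is either to slightly strengthen the precision condition, or to verify by a sharper pairwise analysis (handling each of the six pairs individually rather than via the worst-case $|v|<2\cdot 2^{-b}$) that the constant $\sqrt{2}$ already suffices. Once the constants are aligned, the contradiction is immediate and the lemma follows.
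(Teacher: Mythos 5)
Your argument is essentially the paper's own proof: lower-bound $|p(z)|\ge (r/2)^n$ from the factorization (you supply the two-case justification that every root satisfies $|z-\xi_i|\ge r/2$, using $\alpha=1/2+\sqrt{2}$ and $\beta=1/2$, a step the paper merely asserts), then note that two simultaneous smallness events would force $|p(z)|$ down to the order of $2^{-b}$, contradicting the precision hypothesis. The constant mismatch you flag is real but is already present in the paper: its proof claims that any two of the four events imply $|p(z)|<2^{-b}$, which fails even for the pair $|\mathrm{Re}(p(z))|<2^{-b}$, $|\mathrm{Im}(p(z))|<2^{-b}$ (giving only $\sqrt{2}\cdot 2^{-b}$) and gives only $\sqrt{5}\cdot 2^{-b}$ for mixed pairs, while the lemma's hypothesis says ``greater'' where the proof actually uses $(r/2)^n\ge 2^{-b}/\sqrt{2}$ (compare the ``smaller'' in Theorem~\ref{thm1}), so your calibration concern reflects a defect in the stated lemma rather than a gap in your approach.
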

\begin{proof}
For $z\in S(z_0, r)$, the norm of $p(z)$ can be bounded as follows:
\begin{align*}
&|p(z)|\\
=&|\prod_{j=1}^{n}(z-z_j)|\\
=&\prod_{j=1}^{n}|z-z_j|\\
\ge& \prod_{j=1}^{n} r/2\\
=& (r/2)^n\\
\ge& 2^{-b}/\sqrt{2}
\end{align*}
Considering $|p(z)|^2 = |Re(p(z))|^2 + |Im(p(z))|^2$, the occurrence of any two  of the four aforementioned conditions  results in $|p(z)|<2^{-b}$, thus proving the lemma.
\end{proof}

\begin{lemma}
\label{Lem2}
%For each $k = 0, ..., n'$, if the computation of the real and imaginary part of $p(\omega_k)$ has error less than $1/4$ of its absolute value, then we have $|\widetilde{L_k} - L_k| \leq 1$.
For each $k = 0, ..., n'$, $\widetilde{L_{k+1}} * \widetilde{L_k} \neq 4 (mod \text{ 8})$ 
\end{lemma}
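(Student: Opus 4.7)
The plan is to bound the cyclic distance between consecutive noisy labels by combining two ingredients already established: the ideal bound from Lemma \ref{Lem1} and the Assumption on labelling error. Both are constraints of size at most one (in mod 8 arithmetic), so the result should follow from a simple additive estimate showing the total perturbation stays within $\pm 3$ modulo $8$, which excludes $4 \pmod 8$.

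First I would write the identity
\[
\widetilde{L_{k+1}} - \widetilde{L_k} \equiv (L_{k+1} - L_k) + (\widetilde{L_{k+1}} - L_{k+1}) - (\widetilde{L_k} - L_k) \pmod{8}.
\]
By Lemma \ref{Lem1} applied to the true labels (which live under the ideal hypotheses used there), the first term on the right is congruent to an element of $\{-1,0,1\}$ modulo $8$. By the Assumption stated just above the lemma, each of the two correction terms $\widetilde{L_{k+1}} - L_{k+1}$ and $\widetilde{L_k} - L_k$ is itself congruent to an element of $\{-1,0,1\}$ modulo $8$, so their difference is congruent to an element of $\{-2,-1,0,1,2\}$. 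Adding these ranges, $\widetilde{L_{k+1}} - \widetilde{L_k}$ is congruent to an integer in $\{-3,-2,-1,0,1,2,3\}$ modulo $8$.

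Finally, I would observe that $\{-3,-2,-1,0,1,2,3\} \pmod{8} = \{0,1,2,3,5,6,7\}$ does not contain $4$. By the definition $\widetilde{L_{k+1}} * \widetilde{L_k} := (\widetilde{L_{k+1}} - \widetilde{L_k}) \pmod{8}$, this yields $\widetilde{L_{k+1}} * \widetilde{L_k} \neq 4 \pmod{8}$, completing the proof.

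The main point requiring care, and the only mildly delicate step, is the interpretation of the Assumption: since labels live on the cyclic set $\{0,1,\dots,7\}$, ``off by at most $1$'' must be read cyclically (so that, e.g., the pair $(L_k,\widetilde{L_k})=(0,7)$ also counts as off by one). Once this is made explicit, the whole argument reduces to the mod-$8$ triangle bound above, with no further computation needed.
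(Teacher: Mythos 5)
Your argument is correct and is essentially the paper's own proof spelled out in more detail: the paper likewise combines Lemma \ref{Lem1} (true labels differ by at most $1$ cyclically) with the Assumption ($|\widetilde{L_k}-L_k|\le 1$) to conclude the noisy labels can differ by at most $3$ modulo $8$, hence never by $4$. Your explicit additive decomposition and the remark about reading the Assumption cyclically simply make precise what the paper dismisses with ``it can be easily seen.''
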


\begin{proof}
Lemma \ref{Lem1} essentially proves that  $L_{k+1} * L_k = 0,\pm1 (mod \text{ 8})$. Since $\widetilde{L_k}$ can only differ from $L_k$ by at most one, it can be easily seen that $\widetilde{L_{k+1}} * \widetilde{L_k} \neq 4 (mod \text{ 8})$.
\end{proof}

Let $\epsilon_k = L_k * \widetilde{L_k}\text{ (mod 8)}$ denote the difference between the true label and the computed label. By assumption $\epsilon_k$ can be chosen such that $|\epsilon_k| \leq 1$. Lemma \ref{Lem3} shows that the computation of the winding number is not be affected by such errors.

\begin{lemma}
\label{Lem3}
$$
\# = \frac{1}{8}\sum_{k=0}^{n'-1}[\widetilde{L_{k+1}} * \widetilde{L_k}].
$$
\end{lemma}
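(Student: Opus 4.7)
The plan is to lift everything to integer arithmetic modulo $8$. By the standing Assumption preceding the lemma, write $\widetilde{L_k} \equiv L_k + \epsilon_k \pmod{8}$ with $\epsilon_k \in \{-1, 0, 1\}$ for each $k$. The strategy is to show that each perturbation $\epsilon_k$ passes through the $*$ operation as a pointwise additive correction, and then to sum around the closed loop (where $L_{n'}=L_0$ and $\widetilde{L_{n'}}=\widetilde{L_0}$ force $\epsilon_{n'}=\epsilon_0$) so that the corrections telescope to zero.

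The first and main step is the pointwise identity
$$\widetilde{L_{k+1}} * \widetilde{L_k} \;=\; L_{k+1} * L_k \;+\; (\epsilon_{k+1} - \epsilon_k),$$
read as an equality of integers, using the integer representative of each $*$ value. Both sides are congruent modulo $8$ to $(L_{k+1} - L_k) + (\epsilon_{k+1} - \epsilon_k)$, so the content of the identity is a uniqueness-of-representative statement. By Lemma \ref{Lem1}, $L_{k+1}*L_k \in \{-1,0,1\}$, so the right-hand side has absolute value at most $1+2 = 3$. By Lemma \ref{Lem2}, the left-hand side avoids the residue class $4 \pmod{8}$, which leaves a unique representative in $\{-3,-2,-1,0,1,2,3\}$. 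The right-hand side lies in exactly that range, so the two integers must coincide.

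With the pointwise identity in hand, I would sum over $k = 0, \ldots, n'-1$. The correction terms telescope, and $\epsilon_{n'} = \epsilon_0$ yields
$$\sum_{k=0}^{n'-1}\bigl(\widetilde{L_{k+1}} * \widetilde{L_k}\bigr) \;=\; \sum_{k=0}^{n'-1} \bigl(L_{k+1} * L_k\bigr) \;+\; (\epsilon_{n'} - \epsilon_0) \;=\; \sum_{k=0}^{n'-1} \bigl(L_{k+1} * L_k\bigr) \;=\; 8\#.$$
Dividing by $8$ gives the asserted equality.

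The main obstacle is purely the uniqueness step: the bound from Lemma \ref{Lem1} must match exactly the range allowed by Lemma \ref{Lem2}. If the true-label increments could only be bounded by $|L_{k+1}*L_k|\le 2$, the corrected integer could reach absolute value $4$, collide with the excluded class mod $8$, and force a wraparound that would destroy the telescoping. It is precisely the tight $\{-1,0,1\}$ window of Lemma \ref{Lem1}, together with the single excluded class in Lemma \ref{Lem2}, that makes the argument go through; everything else is routine.
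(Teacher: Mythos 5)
Your proof is correct and follows essentially the same route as the paper's: a pointwise congruence modulo $8$ between the true and computed $*$ values, upgraded to integer equality via the range constraints from Lemmas \ref{Lem1} and \ref{Lem2}, then a telescoping sum closed by $\epsilon_{n'}=\epsilon_0$. The only difference is cosmetic—you state the identity $\widetilde{L_{k+1}}*\widetilde{L_k}=L_{k+1}*L_k+(\epsilon_{k+1}-\epsilon_k)$ directly, while the paper phrases it as the difference of the two $*$ values equaling $\epsilon_{k+1}-\epsilon_k$—and your explicit appeal to Lemma \ref{Lem2} to pin the representative in $\{-3,\dots,3\}$ is, if anything, a cleaner justification of the step the paper compresses.
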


\begin{proof}
For each $k = 0, ..., n'$,

\begin{align*}
&(L_{k+1} * L_k) - (\widetilde{L_{k+1}} * \widetilde{L_k})\\
& \equiv (L_{k+1} - L_k) - (\widetilde{L_{k+1}} - \widetilde{L_k}) \text{ (mod 8)}
\\
& = (L_{k+1} - \widetilde{L_{k+1}}) - (L_k - \widetilde{L_k}) \text{ (mod 8)}
\\
& \equiv (L_{k+1} * \widetilde{L_{k+1}}) - (L_k - \widetilde{L_k}) \text{ (mod 8)}
\\
& \equiv \epsilon_{k+1} - \epsilon_{k} \text{ (mod 8)}
\end{align*}

Since both $(L_{k+1} * L_k) - (\widetilde{L_{k+1}} * \widetilde{L_k})$ and $\epsilon_{k+1} - \epsilon_{k}$ belongs to $\{0,\pm1,\pm2\}$, we must have
\begin{equation}
(L_{k+1} * L_k) - (\widetilde{L_{k+1}} * \widetilde{L_k}) = \epsilon_{k+1} - \epsilon_{k}, \forall k = 1,...,n'.
\end{equation}
Therefore
\begin{align*}
&\# - \frac{1}{8}\sum_{k=0}^{n'-1}[\widetilde{L_{k+1}} * \widetilde{L_k}]\\
& = \frac{1}{8}\sum_{k=0}^{n'-1}[(L_{k+1} * L_k) - (\widetilde{L_{k+1}} * \widetilde{L_k})]
\\
& = \frac{1}{8}\sum_{k=0}^{n'-1}[\epsilon_{k+1} - \epsilon_k]
\\
& = \epsilon_{n'} - \epsilon_0
\\
& = 0.
\end{align*}

This completes the proof of  Theorem \ref{Lem3}.
\end{proof}

The following theorem  summarizes our results:
\begin{theorem}
\label{thm1}
Let $\alpha = (1/2)^g+\sqrt{2}$ and $\beta = 1-(1/2)^g$. Assume that the only roots of $p(z)$ contained in $D(z_0, \alpha r)$ are contained in $D(z_0, \beta r)$
as well. Further assume that the working precision $2^{-b}$ is smaller than $\min\{\sqrt{2}(r/2)^n, (8rL)/2^{g\lfloor\log_2 (64n/\pi)\rfloor}\}$.
Then the modified winding number algorithm of degree $g$ will return the number of roots of $p(z)$ in $S(z_0, r)$. 
\end{theorem}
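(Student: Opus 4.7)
The strategy is to adapt the arguments of Lemmas~\ref{Lem1}, \ref{Lem_cor}, \ref{Lem2}, and \ref{Lem3} to the degree-$g$ version of the algorithm, where the number of boundary samples is $n_g' = 2^g \cdot 2^{\lfloor\log_2(64n/\pi)\rfloor}$ rather than $n'$. The proof then reduces to re-verifying each of three independent conditions: (i) the argument variation between consecutive samples is at most $\pi/4$, (ii) the magnitude of each evaluation is large enough that its octant label is off by at most $\pm 1$, and (iii) no full octant transition is missed between consecutive samples.

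First I would establish the geometric step. Under the hypothesis that the only roots of $p$ in $D(z_0,\alpha r)$ lie in $D(z_0,\beta r)$, every root $\xi$ lies either in $D(z_0,\beta r)$ or outside $D(z_0,\alpha r)$. Since the boundary of $S(z_0,r)$ is at distance between $r$ and $r\sqrt{2}$ from $z_0$, the distance from $\xi$ to any point on the square is at least $\min\{(1-\beta)r,\,(\alpha-\sqrt{2})r\} = (1/2)^g r$. This is the generalization of the $dis \ge r/2$ bound used in the proof of Lemma~\ref{Lem1}.

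Second, I would rerun the Lemma~\ref{Lem1} calculation with the updated parameters $n_g'$ samples and $dis \ge (1/2)^g r$. The segment length between consecutive samples is $8r/n_g'$, so the per-root argument variation between $\gamma(t')$ and $\gamma(t'')$ satisfies
\[
2\tan^{-1}\!\Bigl(\tfrac{8r/n_g'}{2\cdot(1/2)^g r}\Bigr) = 2\tan^{-1}(4/n') \le 2\tan^{-1}(\pi/(16n)) < \pi/(4n),
\]
where the last inequality uses $n' \ge 64n/\pi$. Summing over the $n$ roots, the total argument variation of $p\circ\gamma$ is below $\pi/4$, so the true labels satisfy $L_{k+1}\ast L_k \in \{-1,0,1\}$ for every $k$, which is the degree-$g$ analogue of Lemma~\ref{Lem1}.

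Third, I would treat the precision and labelling step. The condition $2^{-b} < \sqrt{2}(r/2)^n$ (in its proper degree-$g$ strengthening, $\sqrt{2}((1/2)^g r)^n$) together with the geometric lower bound $|p(z)| \ge ((1/2)^g r)^n$ on $S(z_0,r)$ lets us repeat the four-events lemma verbatim: at most one of $|\mathrm{Re}\,p|$, $|\mathrm{Im}\,p|$, $|\mathrm{Re}\,p + \mathrm{Im}\,p|$, $|\mathrm{Re}\,p - \mathrm{Im}\,p|$ can fall below the working precision. This yields $|\widetilde L_k - L_k|\le 1$ for every $k$, so Lemma~\ref{Lem2} applies unchanged and gives $\widetilde L_{k+1}\ast \widetilde L_k \ne 4\pmod 8$. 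Meanwhile the second clause $2^{-b} < 8rL/n_g'$ ensures that $|p(\gamma(a_i))|$ exceeds $8rL/n_g'$ at every sample, so the degree-$g$ analogue of Lemma~\ref{Lem_cor} guarantees that no full rotation of $p\circ\gamma$ is missed between consecutive samples.

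Finally, Lemma~\ref{Lem3} applies verbatim with $n'$ replaced by $n_g'$: telescoping $\epsilon_{k+1}-\epsilon_k$ around the boundary gives zero, so the sum computed with the approximate labels equals the sum with the true labels, which by the argument principle equals the number of roots of $p$ enclosed by $S(z_0,r)$. The main obstacle I anticipate is the precision bookkeeping in step three: the naive reading of the theorem's precision bound $\sqrt{2}(r/2)^n$ is independent of $g$, whereas the geometric lower bound on $|p(z)|$ actually degrades as $(r/2^g)^n$ when $\beta \to 1$, so the cleanest write-up either tightens the precision hypothesis to $\sqrt{2}(r/2^g)^n$ or observes that the square $S(z_0,r)$ itself should be rescaled. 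The remaining steps are routine adaptations of the degree-one proofs.
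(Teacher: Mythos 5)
Your proposal is correct and follows essentially the same route as the paper, which proves Theorem~\ref{thm1} only implicitly by assembling Lemmas~\ref{Lem1}--\ref{Lem3} together with the remark that doubling the number of samples halves the gap, exactly as you do: the degree-$g$ distance bound $(1/2)^g r$ to the square, the rerun of the Lemma~\ref{Lem1} estimate with the enlarged sample set, the precision lemma giving $|\widetilde{L_k}-L_k|\le 1$, Lemma~\ref{Lem_cor} to exclude missed turns, and the telescoping argument of Lemma~\ref{Lem3}. Your closing observation that the threshold $\sqrt{2}(r/2)^n$ should be tightened to $\sqrt{2}((1/2)^g r)^n$ for $g\ge 2$ is a fair catch of an imprecision in the paper's statement rather than a gap in your argument.
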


Given the square $S = S(z_0, r)$, let $\tau_p$ denote the minimum distance between a root $p(x)$ and this square.
Then, by applying the modified winding number algorithm of degree $g=O(\log(1/\tau_p))$ o $S$,
we correctly calculate the number of roots in the region bounded by $S$. This requires $n' = 2^{g\lfloor\log_2 (64n/\pi)\rfloor}$ polynomial evaluations 
with a precision up to $\min\{\sqrt{2}(r/2)^n, (8rL)/n'\}$. Suppose that the cost of single polynomial evaluation with 
the precision $2^{-b}$ is $E(n, b)$, then the required precision $b$ is asymptotically equal to $O(n\log L\log(1/r)+\log(1/\tau_p)\log n)$
 and the overall evaluation cost is $O(n(1/\tau_p)E(n, b))$.

\subsection{Summary}

Now we summarize the key properties of the modified winding number algorithm of degree k:
\begin{itemize}
\item
If the algorithm succeeds on a square $S = S(z_0, r)$, then it returns the number of roots in $S$.
\item
If there is no roots between $D(z_0, [(1/2)^g+\sqrt{2}]r)$ and $D(z_0, [1-(1/2)^g]r)$, then the algorithm is guaranteed to succeed.
\item
The algorithm requires polynomial evaluation  with $b = O(n\log L\log(1/r)+\log(1/\tau_p)\log n)$-bit of precision,
 and the bit-complexity is $O(n(1/\tau_p)E(n, b))$, where $E(n,b)$ denotes the cost of one $b$-bit polynomial evaluation.
\end{itemize}

%\section{Test Results}\label{sexp}

%------------------------------------------------------------------------------

\section{Conclusions}\label{sconcl}

%------------------------------------------------------------------------------

Our polynomial root-finding consists of
three stages.

(i) At first, one computes some initial isolation 
of the roots or the clusters of the roots
 of a polynomial of a degree $d$
 by applying some known algorithms.
One can apply Ehrlich--Aberth's and  WDK iterations, which
have excellent empirical record of fast and reliable global convergence,
but this record has not been supported by formal analysis, 
and the iterations are not efficient for the approximation
of roots isolated in a disc or in a  fixed region on the complex plane. 
One can avoid these deficiencies by applying advanced variants of 
Weyl's Quad-tree, a.k.a. subdivision, algorithms. In particular
a variant of \cite{P00} yields such an initial isolation
at a nearly optimal arithmetic cost, and furthermore control
of the precision of computing is rather straightforward at
most of its steps.  By incorporating our new accelerated version  of 
winding number algorithm by Henrici--Renegar, we enable
application  of the entire  construction 
within a nearly optimal cost bound even where a polyomial 
is given by a black box subroutine for its evaluation. 

(ii) Next one increases the isolation in order to
facilitate the final stage.

(iii) Finally,  
one computes close approximations to all roots 
or the roots in a fixed region
by applying 
a simplified version of
the algorithm of \cite{K98} presented in 
\cite{P12} and \cite[Section 15.23]{MNP13}.
For approximations to all roots one can
instead apply
  Ehrlich--Aberth's or  WDK  iterations. 

Besides acceleration of winding numbe algorithm, we contribute at stage (ii). 
Namely, we measure the isolation of a root or a root cluster
by the isolation ratio of its covering disc,
that is, by the ratio  of the distance from the center
of the disc to the external
roots and the disc radius. Then our algorithm 
 increases the isolation ratio 
from any constant $1+\eta$, for a positive $\eta$,
to $gd^h$, for any pair of positive constants $g$ and $h$.
The algorithm is performed at nearly optimal arithmetic and 
Boolean cost and can be applied
even when a polynomial is given by a black box for its evaluation, 
while its coefficients  are unknown.

If such an isolation is ensured, then 
Newton's iterations  of 
 \cite{K98},
\cite{P12} and \cite[Section 15.23]{MNP13} 
as well as  Ehrlich--Aberth's and  WDK  iterations
  converge 
 right from this point with quadratic 
or cubic rate (cf. \cite{T98}) and also perform at nearly optimal
arithmetic and Boolean cost even where a polynomial is given by 
 a black box for its evaluation.

Our analysis can be readily extended 
to prove the same results even if we  assume
weaker initial isolation with the ratio as low as 
$1+\eta$, for $\eta$ of order $1/\log (n)$.
Then we could move to
 stage (ii)  after fewer iterations of stage (i).
Further extension of our 
results to the case of $\eta$ of order $1/n^h$,
for a positive constant $h$, is a theoretical challenge, but
 progress at stages (ii) and (iii) 
is  important for both theory and practice.

Unlike Ehrlich--Aberth's and  WDK  iterations,
Newton's iterations  of 
 \cite{K98},
\cite{P12} and \cite[Section 15.23]{MNP13} 
produce (as by-product) polynomial factorization,
which  is important on its own right
(see the first paragraph of the introduction)
and enable root-finding in
the cases of multiple and clustered roots.
Let us conclude with some comments on this subject.

Our Theorem \ref{thisol} covers the case 
where we are given a $(1+\eta)$-isolated disc 
containing $k$ roots of a polynomial $p(x)$ of (\ref{eqpoly}).
In that case, at the cost of
 performing $O(d+\log(d)\log(\log (d)))$ ops,
 Algorithm \ref{algpwrs}
outputs approximations to the power sums of these roots
within the error bound $\Delta_{q,k}$ of order 
$r^{q-k}$, for $r=1/(1+\eta)$ and positive integers $q=O(d)$ and $k<q$
of our choice, say, $q=2k$.
The algorithm can be extended to the case where
all roots or root clusters of the polynomial $p(x)$ 
are covered by $s$ such $(1+\eta)$-isolated discs. 
Then we can still approximate the
power sums of the roots in all discs
simultaneously by using $O(d\log^2(d))$ ops.

At this point the known numerically  stable algorithm for
the transition from the power  sums to the coefficients
(cf. \cite[Problem I$\cdot$POWER$\cdot$SUMS, pages 34--35]{BP94})
approximates the coefficients of the $s$ factors of 
$p(x)$ whose root sets are precisely the roots of 
$p(x)$ lying in these $s$ discs.
The algorithm performs within 
dominated arithmetic and Boolean cost bounds.

Having these factors computed, 
one can reduce root-finding for $p(x)$
to root-finding for the $s$ factors
of smaller degrees. 
In particular, this would cover root-finding in the
harder case where the discs 
contain multiple roots or root clusters.
As this has been estimated in \cite{MNP12}
and \cite{P12}, such a divide and conquer approach 
can dramatically accelerate stage (iii),
even versus the Ehrich--Aberth's and WDK algorithms,
except that the initialization of the respective
algorithm requires a little closer approximation 
to the coefficients of the $s$ factors of $p(x)$
than we obtain from the power sums of their roots.

The algorithm of \cite{K98}, extending the one of 
\cite{schoenhage82}, produces such a desired refinement
of the output of Algorithm \ref{algpwrs}
at a sufficiently low asymptotic Boolean cost,
but is quite involved.
In particular, it
reduces all operations with polynomials
to multiplication of long integers.
In our further work we will seek the same asymptotic complexity results
at this stage,
 but with smaller overhead constants,  based
on the simplified version  
 \cite{P12}
and  \cite[Section 15.23]{MNP13}
of the algorithm of \cite{K98}. 

%------------------------------------------------------------------------------

\paragraph*{Acknowledgments.}
 VP has been supported by NSF Grant CCF
1116736 and by PSC CUNY Award 67699-00 45.  ET has been partially
supported by GeoLMI (ANR 2011 BS03 011 06), HPAC (ANR ANR-11-BS02-013)
and an FP7 Marie Curie Career Integration Grant.

%------------------------------------------------------------------------------

%\section*{References}

%\bibliography{mybibfile}

\begin{thebibliography}{\hspace{1cm}}

%------------------------------------------------------------------------------

\bibitem{B83}
S. Barnett, 
\textit{Polynomial and Linear Control Systems},
Marcel Dekker, New York, 1983.

%------------------------------------------------------------------------------

\bibitem{B96}
D. Bini, 
Numerical Computation of Polynomial Zeros by Means of Aberth's Method,
\textit{Numerical Algorithms}
\textbf{13}, 179--200, 1996.

%------------------------------------------------------------------------------

\bibitem{BF00}
 D. A. Bini, G. Fiorentino,
Design, Analysis, and Implementation of
a Multiprecision Polynomial Rootfinder,
\textit{Numerical Algorithms}, \textbf{23}, 127--173, 2000.

%------------------------------------------------------------------------------

\bibitem{BP94}
D. Bini, V. Y. Pan,
{\em Polynomial and Matrix Computations}, 
Volume 1: {\em Fundamental Algorithms} (XVI + 415 pages), 
 Birkh{\"a}user, Boston, 1994.

%\bibitem{BF00}
%D. A. Bini and
% G. Fiorentino, Design, Analysis, and Implementation of
%a Multiprecision Polynomial Rootfinder, {\it Numerical
%Algorithms}, {\bf 23}, 127--173, 2000.

%------------------------------------------------------------------------------

\bibitem{BR14}
 D.A. Bini,  L. Robol,
Solving Secular and Polynomial Equations: A Multiprecision
Algorithm {\em J. Computational and Applied Mathematics}, 
{\bf 272}, 276--292, 2014.

% - - - - - - - - - - - - - - - - - - - - - - - - - - - - - - - - - - - - ------

\bibitem{BJ76}
 G. E. P. Box, G. M. Jenkins,  
\textit{Time Series Analysis: Forecasting and Control},
Holden-Day, San Francisco, 1976.

% - - - - - - - - - - - - - - - - - - - - - - - - - - - - - - - - - - - - ------

\bibitem{BP00}
C. Brunie, P. Picart, 
A Fast Version of the Schur--Cohn Algorithm, 
\textit{Journal of Complexity}, \textbf{16}, \textbf{1}, 54--69, 2000.

% - - - - - - - - - - - - - - - - - - - - - - - - - - - - - - - - - - - - ------

\bibitem{DM89}
 C. J. Demeure,  C. T. Mullis, 
The Euclid Algorithm and
the Fast Computation of Cross--covariance and Autocovariance
Sequences,
\textit{IEEE Trans. Acoust., Speech, Signal Processing}
\textbf{37}, 545--552, 1989.

% - - - - - - - - - - - - - - - - - - - - - - - - - - - - - - - - - - - - ------

\bibitem{DM90}
 C. J. Demeure,  C. T. Mullis,
A Newton--Raphson Method
for Moving-average Spectral Factorization Using the Euclid Algorithm,
 \textit{IEEE Trans. Acoust., Speech, Signal Processing}
\textbf{38}, 1697--1709, 1990.

% - - - - - - - - - - - - - - - - - - - - - - - - - - - - - - - - - - - - ------

\bibitem{F02}
S. Fortune,  
An Iterated Eigenvalue Algorithm for Approximating Roots of Univariate
 Polynomials,
\textit{J. of Symbolic Computation}, {\textbf{33}, \textbf 5},
627--646, 2002.

% - - - - - - - - - - - - - - - - - - - - - - - - - - - - - - - - - - - - -

\bibitem{H59}
 A. S. Householder,  
Dandelin, Lobachevskii, or Graeffe,
\textit{American Mathematical Monthly}
\textbf{66}, 464--466, 1959.

%------------------------------------------------------------------------------------------------
\bibitem{H74}
P. Henrici, 
{\em Applied and Computational Complex Analysis}, 
Wiley-Interscience,
New York, 1974.

%------------------------------------------------------------------------------------------------

%\bibitem{PIMa}
%V. Y. Pan, D. Ivolgin, B. Murphy, R. E. Rosholt, Y. Tang, X. Wang, X. Yan,
%Root-finding with Eigen-solving, Chapter 14, pages 219--245 in 
%{\em Symbolic-Numeric Computation}, (Dongming Wang and Li-Hong Zhi, editors),
% Birkh\"auser, Basel/Boston, 2007.

%13----------------------------------------------------------------------------

\bibitem{K98}
P. Kirrinnis,
Polynomial Factorization and Partial Fraction Decomposition by
  Simultaneous Newton's Iteration,
\textit{J. of Complexity} \textbf{14}, 378--444  (1998).

%15----------------------------------------------------------------------------

\bibitem{KS13}
A. Kobel and M. Sagraloff,
Fast Approximate Polynomial Multipoint Evaluation and Applications,
arXiv:1304.8069v1 [cs.NA] 30 April 2013. 

%13----------------------------------------------------------------------------

\bibitem{MN02}
J.M. McNamee, A 2002 Update of the Supplementary Bibliography on Roots of Polynomials,
\textit{J. of Computational and Applied Math.}  {\bf 142}, 433-434, 2002;
also at web-site \verb+www.yorku.ca/~mcnamee/+

%-------------------------------------------------------------------------------

\bibitem{MN07}
J.M. McNamee, \textit{Numerical Methods for Roots of Polynomials (Part 1)},
Elsevier, Amsterdam, 2007.

%13----------------------------------------------------------------------------

\bibitem{MNP12}
 J. M. McNamee, V.Y. Pan,
Efficient Polynomial Root-refiners: A Survey and New Record Efficiency Estimate, \textit{Computers and Mathematics with Applications}, 
\textbf{63}, 239--254, 2012.

%13----------------------------------------------------------------------------

\bibitem{MNP13}
 J. M. McNamee, V.Y. Pan,
{\em Numerical Methods for Roots of Polynomials},  
 Part 2 (XXII + 718 pages), Elsevier, Amsterdam, 2013.

%14----------------------------------------------------------------------------

\bibitem{MB72}
R. Moenck, A. Borodin,
Fast Modular Transform via Division,
{\em Proc. 13th Ann. Symp. Switching Automata Theory}, 
90--96, IEEE Comp. Soc. Press, Washington, DC, 1972.

%------------------------------------------------------------------------------

\bibitem{P95}
 V. Y. Pan, 
Optimal (up to Polylog Factors) Sequential and
Parallel Algorithms
  for Approximating Complex Polynomial Zeros,
\textit{Proc. 27th Ann. ACM Symp. on Theory of Computing (STOC '95)}, 
ACM Press, New York, 741--750 (1995).

%------------------------------------------------------------------------------

%\bibitem{P97}
% V. Y. Pan, 
%Solving a Polynomial Equation: Some History and Recent Progress,
%{\em SIAM Review}, {\bf 39, ~2}, 187--220, 1997.

%7-----------------------------------------------------------------------------

\bibitem{P00}
 V. Y. Pan,
Approximating Complex Polynomial Zeros: Modified Quadtree ({W}eyl's)
Construction and Improved {N}ewton's Iteration,
\textit{J. of Complexity} \textbf{16,~1}, 213--264, 2000.

%7-----------------------------------------------------------------------------

\bibitem{P02}
V. Y. Pan,
Univariate Polynomials: Nearly Optimal Algorithms for Factorization and
 Rootfinding,
{\em Journal of Symbolic Computations}, {\bf 33}, {\bf 5}, 701--733, 2002.

%7-----------------------------------------------------------------------------

\bibitem{P12}
V. Y. Pan,
Root-refining for a Polynomial Equation, Proceedings of {\em CASC 2012} 
(V. P.  Gerdt, V. Koepf, E. W. Mayr, and E. V. Vorozhtsov, editors), 
{\em Lecture Notes in Computer  Science}, 
{\bf 7442}, 271--282, Springer, Heidelberg (2012).

%1------------------------------------------------------------------------------

\bibitem{P15}
V. Y. Pan, 
Transformations of Matrix Structures Work Again, 
{\em Linear Algebra and Its Applications},
{\bf 465}, 1--32, 2015. 

%1------------------------------------------------------------------------------

\bibitem{Pa}
V. Y. Pan,
Fast Approximate Computations with Cauchy Matrices and Polynomials, 
{\em Mathematics of Computation}, in print. Also 
  arXiv:1506.02285 [math.NA] (32 pages, 6 figures, 8 tables), 7 June 2015.

%1------------------------------------------------------------------------------

\bibitem{PT13}
V. Y. Pan, E. P. Tsigaridas, On the Boolean Complexity of the Real Root Refinement,
in {\em Proc. International Symp. on Symbolic and Algebraic Computations,                          
(ISSAC 2013)}, Boston, Massachusetts, June 2013, (M. Kauers, editor), 299--306, ACM   
Press, New York (2013).       

%1------------------------------------------------------------------------------

\bibitem{PTa}
V. Y. Pan, E. P. Tsigaridas, 
Nearly Optimal Refinement of Real Roots of a Univariate Polynomial,
\textit{J. of Symbolic Computation}, {\bf 74},  181--204, 2016.

 http://dx.doi.org/10.1016/j.jsc.2015.06.009 (2015).
%https://urldefense.proofpoint.com/v2/url?u=http-3A__dx.doi.org_10.1016_j.jsc.2015.06.009&d=Aw%IFaQ&c=uxRm7bTqKzXs8e5WpHvdhQ&r=5yysg-3KSrvy_7kcv8i98sd90BT0yPi-F6UI_GK51VM&m=4qGiIMB8iZbeGap%Y_QlZzbDlV4iq_WbbOOyPEqmBkrE&s=MoqQ-20BBNOEUaCm7lSM3jSeIX18hIza538puR_oV0c&e=  (2015). 

%1------------------------------------------------------------------------------

\bibitem{R87}
J. Renegar,
On the Worst-case Arithmetic Complexity of Approximating Zeros of
Polynomials, \textit{J. of Complexity} \textbf{3,~2}, 90--113 (1987).

%1------------------------------------------------------------------------------

\bibitem{schoenhage82}
 A.~Sch{\"o}nhage,
The Fundamental Theorem of Algebra in Terms of Computational Complexity,
Preliminary Report, 
Mathematisches Institut der Universit{\"a}t T{\"u}bingen,
 Germany, 1982.
%  URL:~http://www.iai.uni-bonn.de/\textasciitilde

%1------------------------------------------------------------------------------

\bibitem{T98}
P. Tilli,
Convergence Conditions of Some Methods for the Simultaneous Computations 
of Polynomial Zeros,
{\em Calcolo}, {\bf 35}, 3--15, 1998.
                   
%1------------------------------------------------------------------------------

\bibitem{vdH08}
 J. van der Hoeven,
Fast Composition of Numeric Power Series,
Tech. Report 2008-09,  Universit{\'e} Paris-Sud, Orsay, France,
  2008.
     
%1------------------------------------------------------------------------------

\bibitem{VD94}
 P. M. Van Dooren, 
Some Numerical Challenges in Control Theory,
\textit{Linear Algebra for Control Theory, IMA Vol. Math. Appl.} \textbf{62},
1994.

%1------------------------------------------------------------------------------

\bibitem{W24}
H. Weyl,
Randbemerkungen zu Hauptproblemen der Mathematik, II,
Fundamentalsatz der Algebra und Grundlagen der Mathematik,
{\em Math. Z.}, {\bf 20}, 131--151, 1924.

%1------------------------------------------------------------------------------

\bibitem{W69}
 G. T. Wilson,
Factorization of the Covariance Generating Function of a Pure Moving-average
Process,
\textit{SIAM J. on Numerical Analysis} \textbf{6}, 1--7, 1969.

%1------------------------------------------------------------------------------

\bibitem{Z12}
J.-L. G. Zapata,   J. C. D. Martin,
A geometric algorithm for winding number computation with complexity analysis, 
\textit{Journal of Complexity}, {\bf 28,~3}, 320--345, 2012.

%1------------------------------------------------------------------------------

\end{thebibliography}

\end{document}